\numberwithin{equation}{section}
\title{Comeback kids: an evolutionary approach of the long-run innovation process\footnote{July 24, 2016}}
\author{Shidong Wang\footnote{\textit{Email:} \nolinkurl{shidong.wang@maths.ox.ac.uk}, S. Wang is supported by a EPSRC Grant EP/I01361X/1 at the University of Oxford} \\ {\scriptsize Mathematical Institute, University of Oxford}\\ {\scriptsize Andrew Wiles Building, Radcliffe Observatory Quarter, Woodstock Road, OX2 6CG, Oxford, United Kingdom}  \\~~\\
Renaud Foucart\footnote{ \textit{Email:} \nolinkurl{renaud.foucart@hu-berlin.de}  } \\ {\scriptsize Department of Economics, Humboldt University}\\ {\scriptsize
    Unter den Linden 6, 10099 Berlin , Germany}  \\~~\\
Cheng Wan\footnote{Corresponding author.  \textit{Email:} \nolinkurl{cheng.wan.2005@polytechnique.org}  } \\ {\scriptsize Department of Economics; Nuffield College, University of Oxford}\\ {\scriptsize New Road, OX1 1NF, Oxford, United Kingdom}}
 \date{ }
 \newcommand{\R}{\mathbb{R}}
 \newcommand{\X}{\mathcal{X}}
 \newcommand{\N}{\mathbb{N}}
 \newcommand{\PP}{\mathbb{P}}
 \newcommand{\al}{\alpha}
 \newcommand{\de}{\delta}
 \numberwithin{equation}{section}
 \newtheorem{theorem}{Theorem}[section]
 \newtheorem{proposition}[theorem]{Proposition}
 \newtheorem{lemma}[theorem]{Lemma}
 \newtheorem{definition}[theorem]{Definition}
\newtheorem{assumption}{Assumption}
\begin{document}
\maketitle

\begin{abstract}
We provide a theoretical framework to understand when firms may benefit from exploiting previously abandoned technologies and brands. We model for the long run process of innovation, allowing for sustainable diversity and comebacks of old brands and technologies. We present two extensions to the logistic and Lotka-Volterra equations, which describe the diffusion of an innovation. First, we extend the short-term competition to a long-term process characterized by a sequence of innovations and substitutions. Second, by allowing the substitutions to be incomplete, we extend the one-dimensional process to a tree-form multidimensional one featuring diversification throughout the long-term development.
\end{abstract}

\bigskip

\textbf{Keywords.} competition, migration, brand rejuvenation, innovation diffusion, diversification, long-term development tree, birth-death process
\newpage

\textit{There is an Indian proverb that goes, ``Sit on the bank of a river and wait: Your enemy's corpse will soon float by'' \footnote{Umberto Eco, ``the Name of the Rose,'' p.550}}

\section{Introduction}

The process of innovation is traditionally seen as a sequence of one-shot games, in which a new (and better) product or technology replaces an outdated standard. As discussed at length below, examples abound however, of declining brands and technological standards who came back from the dead following the arrival of a disruptive innovation. A very timely one is the ``comeback of vinyl.'' While this technology had been almost abandoned in the late nineties, vinyls are now the fastest growing area of music sales, with very positive future prospects, to the point that engineers predict that the vinyl record has come back to stay as the only analogue medium \citep{bartmanski2013vinyl}. 

While the appeal to nostalgia might be a driver of successful and temporary brand rejuvenation, little is known about the underlying process of innovation that make long lasting comebacks possible. We aim at providing a theoretical framework to understand a feature of innovation that is largely overlooked: when a new technology arrives and makes a previously dominant one obsolete, it may leave room for an increase in the demand for an even older technology.

We use stochastic dynamics in order to provide the micro-foundations for a deterministic modeling of the long-term process of innovation, competition and market diversification, based on the idea of ``close competitors.'' To keep the example of the vinyl, we claim to be able to recover the market evolution by the fact that, while compact discs were a direct competitor to vinyl, digital music is much less so. Consumers indeed claim to retrieve satisfaction from the consumption of a vinyl that cannot be obtained using a dematerialized device. Hence, as the market for CD gradually shrinks to be replaced by digitalized music (a direct competitor of CD), a market space opens for a limited but sustainable comeback of the vinyl. Studies in psychology, sociology and anthropology have shown that the choice to buy a new album in vinyl goes beyond simply being matters of nostalgia or fetish. It is precisely because music has become mostly dematerialized that more and more consumers buy new albums in what they perceive to be the most tangible format, the vinyl (\citealp{magaudda2011materiality}, \citealp{bartmanski2013vinyl}, \citealp{negus2015digital}). 

When a brand or technology becomes dominant and starts gaining market shares at the expense of a previously dominant one, our results suggest that one could benefit from going a step backward, and wonder what the previously dominant one replaced. If this abandoned brand or technology vanished due to better substitute features that are also present in the new dominant one, there is no hope for a long lasting return. If the new dominant brand or technology does have such features, a successful comeback is possible.

 Extensive research in business and economics has been carried out on the diffusion of innovation, with objective to understand  the diffusion of new technologies over the course of industrial history  (\citealp{Griliches1957,Griliches1960}, \citealp{Mansfield1961}, \citealp{Geroski2000}, \citealp{Young2009}, \citealp{PMM2010}). More recent models in business and economics have studied simultaneous launches and coexistence of technologies (see for instance \citealp{libai2009role}, \citealp{yan2011competitive} and \citealp{guseo2014within}), including applications to the case of the music industry \citep{guidolin2015technological}. None of them however studies the conditions for a fading technology to come back.

In other contexts, the subject of innovation can be a product,  an idea, a practice, a phenomenon, a social norm or convention, or even a religion. In this paper, we call the subject in question an ``alternative.'' For example, market analysts wish to predict and influence how a new product gradually occupies the market or how old ones vanish from it (\citealp{MahajanMuller1979}, \citealp{MMB1990}, \citealp{Parker1994}, \citealp{ChanTellis2007}); communication specialists try to understand the mechanism of diffusion in order to intervene in the propagation of an idea or information; anthropologists and sociologists investigate on how a particular practice spreads from one tribe, culture or region to another (\citealp{KLH1963}, \citealp{Brown1981}, \citealp{Clark1984}, \citealp{Rogers2003}); and political scientists are interested in the adoption of new policies among different states.

In most of the cases, the innovation does not arise out of nothing. One or several older alternatives with similar functions may well have existed before the arrival of the innovation, though the innovation can be more advanced or sophisticated and ready to be accepted by a larger population. In this sense, we talk rather of ``substitution'' than ``diffusion.'' For example, synthetic fibers replaced natural fibers in cloth making, and the diesel locomotive replaced the steam locomotive in railroad transportation. 

The seminal work of \cite{Mansfield1961} initiates the rigorous mathematical modeling of this substitution process, in particular in economics research. \cite{FisherPry1971} provides an empirical logistic S-type equation to describe the evolution of the market occupation rate of an innovation. A lot of studies follow this approach by extending and improving the model (\citealp{Blackman1972,Blackman1974}, \citealp{BretMahajan1980,EMM1981},  \citealp{SharifKabir1976a,SharifKabir1976b,SharifRama1981}, \citealp{Skiadas1985}, etc.)  A typical logistic differential equation is
\begin{equation}\label{eq:32}
\dot{n}=b n (N - n),
\end{equation}
where $n$ is the market share of the alternative in question, $N$ is the upper bound of the market share, $b$ is the coefficient of imitation or other mechanism leading to the adoption of the alternative by the population. In particular, $b$ can be a constant in a most elementary model or a function of $n$ in more sophisticated models.

Another approach consists in modeling the competition between the old and the new alternatives in an explicit way, by constructing a Lotka-Volterra dynamical system (\citealp{Batten1982}, \citealp{KBJ1985}, \citealp{Bhargava1989}). For example, let $m$ (resp. $n$) represent the number of individuals adopting the old alternative $x$ (resp. the new alternative $y$). Then, a system of Lotka-Volterra equations governing their competitive relation is:
\begin{equation}\label{eq:1}
\begin{cases}
\dot{m}=\alpha_1 m (M - m -\beta n) -\gamma_1 m, \\
\dot{n}=\alpha_2 n (N - n -\beta m) -\gamma_2 n.
\end{cases}
\end{equation}
Here, $\alpha_1$ is the rate of adoption, $\gamma_1$ the rate of abandonment and $M$ the upper bound of the market capacity for $x$; the notations are similar for $y$. Compared with the logistic equation \eqref{eq:32}, the parameter $\beta$ introduces, in addition, the competition between $x$ and $y$.

Both logistic and Lotka-Volterra equations were originally introduced in the study of animal population or epidemics,  where they remain an important tool in the study of populations (see for instance \citealp{lafferty2015general}). They are applied by analogy in the model of innovation diffusion and substitution in economics \citep{pistorius1997multi}, and are supported by empirical studies on the macroscopic level (the population level). In this framework, an innovation eventually becomes old, then a newer innovation appears and another round of competition takes place.

In actual markets, the process is however more subtle. The substitution of a new alternative for the old one is not always complete, and the coexistence of several generations of technologies is not rare. Moreover,  when the environment of the competition changes and, in particular, with the periodical arrival of innovations, the characteristic of the current competition may be influenced and changed as well.

We provide real world examples of such processes of coexistence and comebacks of alternatives in Section \ref{sec:examples}. We present in Section \ref{sec:short} an individual-based foundation of the logistic substitution and Lotka-Volterra competition processes, and then extend it to a {\em long-term} process characterized by a {\em sequence} of innovations, competitions and substitutions. This recovers the well-known representation of innovation as a succession of one-shot competitions between two alternatives (\citealp{Champnt2006}).  We then extend in Section \ref{sec:long} the model to a tree-form one, in which several alternatives survive after each short-term competition and, in the long run, innovations drive the growth of a ``development tree.''

We use a stochastic birth-death model as the foundation of the deterministic logistic and Lotka-Volterra equations. As remarked by \cite{FisherPry1971} on their logistic diffusion model, deterministic models are ``not to be applied to substitutions prior to their achieving a magnitude of a few percent, at which time a definite growth pattern is established and the very early history has little effect upon the trend extrapolation.'' The deterministic, continuous logistic equation and Lotka-Volterra dynamical system on the aggregate level is a macroscopic approximation of the stochastic, discrete process on the individual level. The approximation is therefore valid only if the size of the population is large enough so that the fluctuation or uncertainty on the individual level can be averaged.

\section{Examples \label{sec:examples}}
\subsection{The Vinyl Comeback}
According to Nielsen Soundscan, more than 9.2 million vinyl records were sold in the U.S. in 2014, marking a 52\% increase over the year before, the highest numbers recorded by SoundScan since the music industry monitor started tracking them back in 1991. These numbers are likely to be under-evaluated, as a large share of vinyl albums are purchased at independent record stores who do not necessarily report their figures.\footnote{The facts in this subsection are borrowed from: Eric Felten, ``It's Alive! Vinyl Makes a Comeback,'' the Wall Street Journal, January 27, 2012, Megan Gibson, ``Here's Why Music Lovers Are Turning to Vinyl and Dropping Digital,'' The Time, January 13, 2015, Allan Kozinn, ``Weaned on CDs, They're Reaching for Vinyl,'' The New York Times, June 9, 2013,  and  Glenn Peoples and Russ Crupnick, ``The True Story of How Vinyl Spun Its Way Back From Near-Extinction,'' Billboard, December 17, 2014} Depending on the sources, this represents between 6 and 15 percent of the total album sales on a year. These days, every major label and many smaller ones are releasing vinyl, and most major new releases have a vinyl version.

Analysts seem to all agree on one thing: the success of the vinyl is due to the fact that ``Records are admirably physical, the antithesis of the everywhere-and-nowhere airiness of \textit{the cloud},'' and that high involvement in music is connected to a perception of tangible records as more valuable \citep{styven2010need}. And, indeed, when talking about the advantages of vinyl, the benchmark of today's mass consumption of music is not the CD anymore, but digital music. In \textit{Billboard }magazine, Gleen Peoples and Russ Crupnick relate a research study carried on behalf of the Music Business Association in 2006, when the industry was trying to save the market for physical records from the competition of illegal downloads. Consumers were surveyed over a number of innovating products, such as interactive and connected CDs, branded cards with a download code, or music placed on branded storage devices. Among these alternatives was a ``premium'' vinyl including a digital download card. Of all these products, vinyl tested the worse, and didn't particularly appeal to any important fan segments. Almost ten years later, all these relatively successfully tested products became, in the best case, very small niches, while premium vinyl records are now widely sold.

Our claim is that in 2006 the music industry was looking for an alternative able to improve over the CD standard. Today's vinyl success comes from the fact that it is not so much compared to CD anymore, but mostly to digital music. In the words used in the paper, CD was a close competitor of vinyl, and a close competitor of digital music. However, vinyl is not a close competitor of digital music, and therefore as CD is supplanted by digital music, there is room for a comeback of vinyl.

\subsection{Lego and the return of ``traditional'' toys}

A private company can benefit from understanding the impact of new technologies on a possible comeback of previous standards. In 2003, the once flourishing Danish company Lego was virtually out of cash, on its way to bankruptcy.\footnote{The facts in this subsection are borrowed from: Craig McLean, ``Lego, play it again,'' The Telegraph, December 17, 2009, ``Innovation Almost Bankrupted Lego - Until It Rebuilt with a Better Blueprint,'' Time, July 23, 2012,  Jay Greene, ``How LEGO revived its brand,'' Business Week, July 23, 2010, and Nick Watt and Hana Karar, ``The land where Lego comes to life,'' Abc News, November 16, 2009.} The decay of the company began in the mid-nineties, when children started to abandon the traditional building blocks in favor of more sophisticated toys. It was a time of video games, electronic and flashy innovations, or cheaper and more disposable toys. The strategy of Lego was therefore to mimic this trend, producing a large variety of innovating products, without a lot of commercial success.

In 2005 however, the company decided to return to its basics, simplifying its product line by reducing the number of components produced in its factories from 12,400 to 7,000, and going back to its core principle of building bricks. This strategy seems to have largely paid. Lego is again a profitable company and, from 2012 to 2015, Lego's sales have gone up an average of 24\% annually and profits have grown by 41\%.

Our claim is that part of this success can be explained by innovations in the entertainment industry, leaving more and more children playing on dematerialized games, such as computers or tablets. Consumers in search of easy and innovative toys started to prefer those to a number of relatively short-living physical toys. However, this substitution opened a new avenue for long-lasting, high quality, physical toys such as Lego's, which are not ``just on a screen'' and no close competitors to tablets or software.

\subsection{Nuclear and renewable electricity}

In the early seventies, following the peak in the prices of fossil fuels, most developed countries started to massively invest in nuclear power plants.\footnote{The facts in this subsection are borrowed from \cite{neuhoff2005large}, \cite{verbruggen2008renewable}, \cite{sovacool2009intermittency},  Eurostat (2013) and data available on the website of the world nuclear association (world-nuclear.org)} The technology was rather recent and, at the time, relatively cheaper and socially acceptable, so that in the early 2000s, countries like France, Sweden or Belgium had more than $50\%$ of their electricity coming from nuclear sources (with almost $80\%$ for France). One of the drawbacks of nuclear power as a ``baseload power'' however, is that it takes time to turn on and off, and therefore needs to be complemented by more flexible sources of energy.

As environmental concerns become more and more important and the social acceptability of nuclear power decreases, renewable technologies are pushed forward in order to become competitive on the electricity market, with the ambition of making it the baseload power in a large number of developed countries. An important characteristic of renewable electricity sources is their intermittency, which makes them less and less compatible with nuclear power plants as soon as renewable electricity represents a higher share of the production. Hence, with the technologies available today, a direct consequence of higher shares of renewable is that it increases the cost of nuclear power and makes the more flexible sources such as gas power plants much more attractive, even more so than as a complement of nuclear power.

\subsection{The Nintendo 64}

A last example can be found in the release of the Nintendo 64 in the late nineties.\footnote{The facts in this subsection are borrowed from: Lawrence Fisher, ``Nintendo Delays Introduction Of Ultra 64 Video-Game Player,'' The New York Times, May 6, 1995, Mark Langshaw, ``Sony PlayStation vs Nintendo 64: Gaming's Greatest Rivalries,'' Digital Spy, December 9, 2012, Michael Krantz, ``Super Mario's dazzling comeback,'' Time, June 24, 2001 and Cesar Bacani and Murakami Mutsuko, ``Nintendo's new 64-bit platform sets off a scramble for market share,'' Asia Week, April 18, 1997.} Nintendo was a direct competitor of Sega and both were the remaining historical producers of video game consoles, sharing the common characteristics of having very identifiable characters following from one platform to the other (for instance, Mario for Nintendo and Sonic for Sega). In 1994, both Sega and Nintendo had planned to release a new platform. The Sega Saturn was innovating by providing a new technology: video games were produced on CD instead of rom cartridges, allowing for cheaper and much more flexible production of games, at the cost of being slower to load. However, two unexpected elements arrived. First, a new competitor (Sony), launched a high quality console based on the CD technology (the Playstation), which was quickly adopted as the new standard instead of the Sega Saturn. Second, for technical reasons, Nintendo had to delay the release of its new console. When the Nintendo 64 was released two years behind schedule, the Sega Saturn had failed to become a standard, and Nintendo was competing mostly against the Playstation. At this time, the two products were sufficiently differentiated - all the old characters and a more stable technology versus a more performance-based console - so that Nintendo managed to achieve significant market shares and a relative commercial success.  Hence, it is again the same logic: Nintendo has been able to survive because its close competitor Sega had been substituted by a non-close competitor, Sony.

\section{Sequential substitution model \label{sec:short}}
\subsection{Individual-based competition and substitution model}\label{section:preliminary}
This subsection constructs a stochastic birth-death model describing the short-term competition between alternatives on the individual level, which leads to the substitution of a new alternative (i.e. innovation) for the old one. This is the elementary model and the base of the paper. We discuss two particular examples, which correspond to the logistic diffusion model and the Lotka-Volterra competition model. The next subsection introduces a long-term process made up of successive innovations sparsely distributed along the time line, each followed by an elementary short-term competition/substitution process.

The mathematical tools used in this section are borrowed from population dynamics studied by probability theorists. We first describe the original model and adapt it to our context. Then we introduce and interpret the analytical properties of the dynamics.

Alternatives achieving the same goal or having similar functions compete with each other when they are both present in the market. An alternative is characterized by its \emph{trait}, whose value belongs to a given trait space $\X$. For example, coal, oil, gas, hydroelectric power and  nuclear have distinct traits as energy resources. An individual, who can be a person, a firm, etc., is a (potential) adopter of an alternative. We consider, for each trait, the number of individuals having adopted an alternative of that trait. To fit into the language of birth-death process, we call the event whereby an individual who has not yet adopted any alternative adopts an alternative of trait $x$  ``the birth of an individual of trait $x$.'' For example, it can be the entry of a firm into the market. Conversely, the ``death of an individual of trait $x$'' represents the event that an individual actually holding an alternative of trait $x$ abandons the trait while not adopting another one. For example, it can be the exit of a firm from the market, either for a reason linked only to its own performance, or because of its loss of market share due to competition against better-equipped firms. Finally, ``mutation'' means the birth of an individual with a new trait that has never existed before, which is interpreted as the arrival of an innovation.

Recall that in continuous-time birth-death processes, an individual gives birth or dies at independent, random exponential times. Besides, the reproduction rate, death rate and mutation rate vary among different traits.

Now let us consider a multi-trait birth-death process with mutations. At any time $t$, the population is composed of a finite number $I_t$ of individuals, respectively characterized by their trait $x_1(t), \ldots, x_{I_t}(t)$. The traits belong to a given closed subset of $\R^d$, called the trait space $\X$. Let $\delta_{x}$ stand for the Dirac distribution concentrated on $x$. The population state at time $t$ is specified by the counting measure on $\X$
\begin{equation*}
\nu_t=\sum\limits_{i=1}^{I_t}\delta_{X_i(t)}.
\end{equation*}

To write the dynamics of the process $(\nu_t)_{t>0}$, let us  introduce the following parameters:
\begin{itemize}
  \item $b(x)$ is the \emph{clonal birth} rate from an individual with trait $x$, who reproduces an offspring with the same trait as its parent.
  \item $d(x)$ is the \emph{natural death} rate of an individual with trait $x$.
  \item $\al(x,y)$ is the \emph{competition kernel} felt by some individual with trait $x$ from another individual with trait $y$, which results in the death of an individual with trait $x$.
  \item $\mu(x)$ is the \emph{mutant birth} rate of an parental individual with trait $x$. Mutation is indeed another type of birth event, which results in a newborn with a different trait from its parent.
  \item $p(x,dh)$ is the \emph{law of mutation variation} $h=y-x$ between a mutant $y$ and its parental trait $x$. Since the mutant trait $y=x+h$ is in $\X$, the law has support in $\X-x:=\{y-x: y\in\X\}\subset \R^d$.
\end{itemize}

The natural death is due to ``aging.'' For example, it can be the natural death of a person, or the exit of a firm from the market for reasons independent from the competition with other alternatives. The competition captures the influence of the presence of an alternative on the market on the survival rate of a given trait. For instance, if vinyl is directly threatened by CD, the massive presence of CDs in the market will increase the death rate of vinyl retailers. Competition is also present within an alternative: all other things being equal entry on the market for vinyl increases competition in this market. The total death rate of an individual of trait $x$ is hence $D(x) =d(x)+\int \al(x,y)\langle \nu_0, 1_{\{y\}}\rangle dy$, where $\langle \nu_t, 1_{\{y\}}\rangle dy$ is the size of individuals with trait $y$.

Let us give an algorithmic, pathwise construction of the dynamics of process $(\nu_t)_{t\geq 0}$. The construction can be summarized as follows.
\begin{description}
\item[(1)] At any given time, say initial time 0, choose an individual from the population $\nu_0$ at random. Assume that its trait is $x$.

\item[(2)] Simulate an exponential random variable $T_1$ of parameter $b(x)+D(x)+\mu(x)$.

\item[(3)] Simulate a uniform random variable $\theta_1$ on interval $[0, 1]$. According to its value, decide what happens to this individual of trait $x$ at time $T_1$: its death, a clonal birth from it or a birth with mutation from it.
\begin{itemize}
\item If $\theta_1\in[0, \tfrac{D(x)}{b(x)+D(x)+\mu(x)})$, the individual is killed and the sub-population size of trait $x$ decreases by one.
\item If $\theta_1\in[\tfrac{D(x)}{b(x)+D(x)+\mu(x)}, \tfrac{D(x)+b(x)}{b(x)+D(x)+\mu(x)})$, a clone birth of trait $x$ is reproduced and the sub-population size of trait $x$ increases by one.
 \item If $\theta_1\in[\tfrac{D(x)+b(x)}{b(x)+D(x)+\mu(x)}, 1]$, a birth with mutation occurs from parental trait $x$, and an innovative sub-population is created. The innovative trait $x+h$ is determined by the mutation transition law $p(x,dh)$.
\end{itemize}
\item[(4)] Return to step (1) and continue the iteration.
\end{description}

This construction serves as the definition of the individual-based model as well as a basic simulation algorithm for this model. The following assumption guarantees that the process is well defined.

\begin{assumption}\label{assp1}
There exist constants $\bar{b}$, $\bar{d}$, $\underline{\alpha}$, $\bar{\alpha}$, such that $~0<b(x)\leq\bar{b}$,  $0<d(x)\leq\bar{d}$, $0<\underline{\alpha}\leq\alpha(x,y)\leq\bar{\alpha}$, and $b(x)-d(x)>0$ for all $x$ in $\X$.
\end{assumption}
Indeed, owing to the upper-bounds of the transition rates given in Assumption \ref{assp1}, the dynamics are stochastically dominated by a Poisson process with birth rate $\bar{b}$ and death rate $\bar{d}+\bar{\al}\nu_t$. This fact justifies the existence of the process $(\nu_t)_{t\geq 0}$.

For each $K\in \mathbb{N}^*$, consider a population with initial size $K$. The proportion of individuals with each trait in the population is given by $X_t^K=\frac{\nu_t}{K}$. Our objective is to see whether the dynamics of $X^K$ can be approximated by deterministic equations when $K$ is large enough.

To give some flavor of our approach, let us cite two specific examples of process $(\nu_t)_{t\geq 0}$, respectively with only one trait and two traits in the trait space, and without further mutations, i.e. mutation rate $\mu(x)\equiv 0$ for any $x\in\X$. Hence the process depicts the short-term diffusion or competition behavior of the trait(s) on the individual level. It leads to an approximation of the usual logistic and Lotka-Volterra equations (cf. equations \eqref{eq:32} and \eqref{eq:1}) on the population level.

\begin{proposition}\label{prop:logistic}
Assume $\X=\{x\}$, $\mu(x)\equiv 0$, and $K\times\alpha(x, x)\to\alpha_0(x, x)$ as $K\rightarrow +\infty$. Suppose that the initial population size $\langle X^K_0, 1_{\{x\}}\rangle$ converges to $n_0$ as $K\rightarrow +\infty$, then $\langle X^K_t, 1_{\{x\}}\rangle$ is converges in probability to the solution of the following (logistic) equation:
\begin{equation}\label{eq:log}
\dot{n}_t(x)= \left( b(x)-d(x)-\alpha_0(x,x)n_t(x) \right) n_t(x),
\end{equation}
which has a unique stable point $\bar{n} (x)=\frac{b(x)-d(x)} {\alpha_0(x, x)} $.
\end{proposition}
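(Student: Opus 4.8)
The plan is to observe that, with $\X=\{x\}$ and $\mu\equiv 0$, the population collapses to a scalar continuous-time birth-death process. Writing $N^K_t:=\langle\nu_t,1_{\{x\}}\rangle$ and $n^K_t:=\langle X^K_t,1_{\{x\}}\rangle=N^K_t/K$, the construction in steps (1)--(4) shows that $N^K$ jumps by $+1$ at total rate $b(x)N^K_t$ and by $-1$ at total rate $(d(x)+\al(x,x)N^K_t)N^K_t$. First I would record the semimartingale (Dynkin) decomposition of the rescaled process,
\begin{equation*}
n^K_t=n^K_0+\int_0^t F^K(n^K_s)\,ds+M^K_t,
\end{equation*}
where the compensated drift is $F^K(n)=(b(x)-d(x))\,n-K\al(x,x)\,n^2$ (obtained by dividing the population birth-minus-death rate by $K$ and using $N^2/K=K(n^K)^2$), and $M^K$ is a purely discontinuous martingale whose jumps have size $1/K$, so that its predictable quadratic variation is
\begin{equation*}
\langle M^K\rangle_t=\frac{1}{K}\int_0^t \Big((b(x)+d(x))\,n^K_s+(K\al(x,x))\,(n^K_s)^2\Big)\,ds .
\end{equation*}

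The limit is then read off this decomposition. Since $K\al(x,x)\to\al_0(x,x)$, we have $F^K(n)\to F(n):=(b(x)-d(x)-\al_0(x,x)n)\,n$ uniformly on compact $n$-sets, and $F(n)$ is exactly the right-hand side of \eqref{eq:log}; moreover $\langle M^K\rangle_t=O(1/K)$ on events where $n^K$ stays bounded, so the martingale part washes out. To turn this into a rigorous statement I need an a priori bound on $\sup_{s\le t}n^K_s$ uniform in $K$. Because the competition term only raises the death rate, $N^K$ is stochastically dominated by a pure-birth (Yule) process of per-capita rate $\bar b$ from Assumption~\ref{assp1}, whose moments are finite and grow at most exponentially in $t$; this yields $\sup_K\E[\sup_{s\le T}(n^K_s)^2]<\infty$ and, through a localizing stopping time $\tau_A=\inf\{t:n^K_t\ge A\}$, uniform control with $\PP(\tau_A\le T)\to 0$ as $A\to\infty$.

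With these ingredients I would run a Gronwall comparison between $n^K$ and the deterministic solution $n$ of \eqref{eq:log}. Subtracting the integral equations and splitting $F^K(n^K_s)-F(n_s)=(F^K(n^K_s)-F^K(n_s))+(F^K(n_s)-F(n_s))$, the first term is bounded by $L_A\,|n^K_s-n_s|$ for $s\le\tau_A$ (a local Lipschitz constant coming from the quadratic $F^K$), while the second equals $(\al_0(x,x)-K\al(x,x))\,n_s^2\to 0$ uniformly on $[0,T]$. Bounding $\E[\sup_{s\le t}|M^K_s|]$ by $\E[\langle M^K\rangle_t]^{1/2}=O(K^{-1/2})$ via Doob's inequality and then applying Gronwall's lemma gives $\E[\sup_{s\le T\wedge\tau_A}|n^K_s-n_s|]\to 0$; combined with $n^K_0\to n_0$ and $\PP(\tau_A\le T)\to 0$, this delivers convergence in probability, uniformly on compact time intervals. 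The main obstacle is precisely the quadratic, only locally Lipschitz, competition nonlinearity, which is what forces the localization and moment-control step; the remainder is the standard martingale-plus-Gronwall machinery (the statement is a scalar instance of the large-population limit for locally regulated populations).

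Finally, the equilibrium claim is elementary ODE analysis: setting $F(n)=0$ gives $n=0$ or $n=\bar n(x)=\tfrac{b(x)-d(x)}{\al_0(x,x)}$, which is positive since $b(x)-d(x)>0$ by Assumption~\ref{assp1}. As $F'(n)=b(x)-d(x)-2\al_0(x,x)n$ equals $b(x)-d(x)>0$ at $n=0$ and $-(b(x)-d(x))<0$ at $n=\bar n(x)$, the origin is unstable and $\bar n(x)$ is the unique (asymptotically) stable equilibrium, as asserted.
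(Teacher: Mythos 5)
Your proof is correct, but it takes a genuinely different route from the paper: the paper offers no self-contained argument for Proposition \ref{prop:logistic} at all, and instead simply invokes \cite[Theorem 5.3]{FM04}, of which the proposition is stated to be the one-dimensional realization. That cited theorem concerns the measure-valued process and is proved by a tightness-plus-identification argument in Skorokhod space; the same citation strategy also covers the two-trait Lotka--Volterra limit in the next proposition. What you do instead is the classical Kurtz/Ethier--Kurtz law of large numbers for density-dependent Markov chains, specialized to the scalar case: the Dynkin decomposition with drift $F^K(n)=(b(x)-d(x))n-K\al(x,x)n^2$, the bound $\langle M^K\rangle_t=O(1/K)$ on bounded sets, Yule-process domination for a priori control, localization, Doob, and Gronwall. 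Your computations of $F^K$ and $\langle M^K\rangle$ are right (replacing $N^K(N^K-1)$ by $(N^K)^2$ in the self-competition term only shifts the drift by $O(1/K)$, which is harmless), and the equilibrium analysis of $F(n)=\left(b(x)-d(x)-\al_0(x,x)n\right)n$ is exactly as it should be, using $b(x)-d(x)>0$ from Assumption \ref{assp1}. Your approach buys an elementary, quantitative proof -- Gronwall even yields an explicit rate of order $K^{-1/2}$ plus the drift error $\left|K\al(x,x)-\al_0(x,x)\right|$ -- at the cost of not extending as cheaply to the measure-valued, multi-trait setting with mutation that the paper needs later. One small point to tighten: Yule domination gives $\sup_K\E\bigl[\sup_{s\le T}(n^K_s)^2\bigr]<\infty$ only if the initial conditions have uniformly bounded second moments, which convergence of $n^K_0$ to a constant does not by itself provide; since the conclusion is only convergence in probability, you should first restrict to the event $\{n^K_0\le n_0+1\}$, whose probability tends to one, and run the domination and localization argument there. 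With that adjustment the proof is complete.
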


The result has been proved in \cite[Theorem 5.3]{FM04} and is its one-dimensional realization. Similarly, there is a two-dimensional limiting process.

\begin{proposition}
Assume $\X=\{x, y\}$, $\mu(z)\equiv 0$ for $z \in \X$, and $K\times\alpha(w, z)\to\alpha_0(w, z)$ as $K\rightarrow +\infty$ for $w,z\in \X$. Suppose that the initial sub-population sizes $\langle X^K_0, 1_{\{x\}}\rangle$ and $\langle X^K_0, 1_{\{y\}}\rangle$ converges to $n_0(x)$ and $n_0(y)$ as $K\rightarrow +\infty$,  then $(\langle X^K_t, 1_{\{x\}}\rangle, \langle X^K_t, 1_{\{y\}}\rangle )$ converges in probability to the solution of the following (Lotka-Volterra) equations:
\begin{equation}\label{eq:lv}
\begin{cases}
\dot{n}_t(x)=\left(b(x)-d(x)-\alpha_0(x,x)n_t(x)-\alpha_0(x,y)n_t(y)\right)n_t(x), \\
\dot{n}_t(y)=\left(b(y)-d(y)-\alpha_0(y,x)n_t(x)-\alpha_0(y,y)n_t(y)\right)n_t(y).
\end{cases}
\end{equation}
\end{proposition}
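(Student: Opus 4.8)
The plan is to exploit the finiteness of the trait space: since $\X=\{x,y\}$ and there are no further mutations, the measure-valued process $X^K$ is entirely described by the two-dimensional vector $(u^K_t,v^K_t):=(\langle X^K_t,1_{\{x\}}\rangle,\langle X^K_t,1_{\{y\}}\rangle)$, and the statement reduces to a law of large numbers for this $\R^2_+$-valued jump process. I would follow exactly the scheme of \cite{FM04} that already underlies Proposition \ref{prop:logistic}, namely: first write the semimartingale decomposition of $(u^K,v^K)$; then establish uniform moment bounds and tightness; then identify every limit point as a solution of the Lotka--Volterra system \eqref{eq:lv}; and finally conclude by uniqueness of that solution.

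First I would read off the infinitesimal generator of $(u^K,v^K)$ from the algorithmic construction. A trait-$x$ birth occurs at total rate $b(x)Ku^K_s$ and shifts $u^K$ by $+1/K$, while a trait-$x$ death (natural plus competitive) occurs at rate $\big(d(x)+\alpha(x,x)Ku^K_s+\alpha(x,y)Kv^K_s\big)Ku^K_s$ and shifts $u^K$ by $-1/K$. Using the scaling $K\alpha(x,\cdot)\to\alpha_0(x,\cdot)$, the resulting drift is
\begin{equation*}
F_x(u,v)=\big(b(x)-d(x)-\alpha_0(x,x)u-\alpha_0(x,y)v\big)u,
\end{equation*}
which is precisely the right-hand side of the first line of \eqref{eq:lv}, and symmetrically for $v^K$. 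Thus
\begin{equation*}
u^K_t=u^K_0+\int_0^t F_x(u^K_s,v^K_s)\,ds+M^{K,x}_t,
\end{equation*}
where $M^{K,x}$ is a square-integrable martingale whose predictable quadratic variation carries an extra factor $1/K$ (each jump has size $1/K$ while the jump rate is $O(K)$), so that $\E[(M^{K,x}_T)^2]=O(1/K)$.

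The uniform control needed for tightness follows from Assumption \ref{assp1} and the stochastic domination noted just after it: the total mass $u^K_t+v^K_t$ is dominated by a rescaled linear birth process of individual rate $\bar b$, which yields $\sup_K\E\big[\sup_{s\le T}(u^K_s+v^K_s)^p\big]<\infty$ for the relevant $p$. These bounds make the martingale term vanish in $L^2$ and, via the Aldous--Rebolledo criterion applied separately to the finite-variation and bracket parts, give tightness of $(u^K,v^K)_K$ in the Skorokhod space $D([0,T],\R^2_+)$.

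The main obstacle is the identification step, because the drift is nonlinear: passing to the limit in the quadratic competition terms $\alpha_0(x,x)(u^K)^2$ and $\alpha_0(x,y)u^Kv^K$ cannot be done by weak convergence alone. Here I would use the moment bounds to secure the uniform integrability that lets these products converge, so that any limit point $(u,v)$ is continuous and satisfies \eqref{eq:lv} in integrated form. Since $F_x,F_y$ are polynomials, hence locally Lipschitz, and the a priori bounds keep the trajectory in a compact set (non-explosion), Cauchy--Lipschitz gives a unique solution of \eqref{eq:lv}; uniqueness then forces all limit points to coincide with this single deterministic solution, which upgrades the weak convergence to convergence in probability, as claimed.
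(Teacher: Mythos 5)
Your proposal is correct and takes essentially the same route as the paper, which proves this proposition simply by invoking \cite[Theorem 5.3]{FM04} as its ``two-dimensional realization'': the scheme you describe (semimartingale decomposition with drift $F_x,F_y$ and an $O(1/K)$ martingale bracket, moment bounds by stochastic domination, Aldous--Rebolledo tightness, identification of limit points, and uniqueness via Cauchy--Lipschitz upgrading weak convergence to convergence in probability) is precisely the proof of that cited theorem specialized to the two-trait, mutation-free case.
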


\begin{proposition}\label{prop:stable analysis of LV}
Consider the Lotka-Volterra system $\left(n(x), n(y)\right)$ satisfying equations \eqref{eq:lv}.
Suppose that $n_0(x), n_0(y)>0$ and $\bar f(y,x):=b(y)-d(y)-\al(y,x)\bar{n}(x)>0$ and its symmetric form $\bar f(x,y)<0$. Then, we conclude that $(0, \bar n(y))$ is the only stable point.
\end{proposition}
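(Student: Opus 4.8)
The plan is to treat \eqref{eq:lv} as a planar competitive Lotka--Volterra system and to carry out a complete equilibrium-and-linearization analysis. Write $r(x):=b(x)-d(x)>0$ and $r(y):=b(y)-d(y)>0$, so that $\bar n(x)=r(x)/\alpha_0(x,x)$ and $\bar n(y)=r(y)/\alpha_0(y,y)$ are the two single-trait carrying capacities from Proposition~\ref{prop:logistic}. The two hypotheses have a clear reading: $\bar f(y,x)>0$ says that trait $y$ can invade a population sitting at $(\bar n(x),0)$, while $\bar f(x,y)<0$ says that trait $x$ cannot invade a population sitting at $(0,\bar n(y))$. This is exactly the competitive-exclusion configuration in which $y$ drives $x$ out, so I expect $(0,\bar n(y))$ to be the only sink. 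First I would locate every equilibrium, then linearise at each and read off the signs of the eigenvalues.

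Setting the right-hand sides of \eqref{eq:lv} to zero and factoring gives four candidate equilibria: $(0,0)$, the two axial points $(\bar n(x),0)$ and $(0,\bar n(y))$, and possibly an interior coexistence point solving the linear system $\alpha_0(x,x)n(x)+\alpha_0(x,y)n(y)=r(x)$ together with $\alpha_0(y,x)n(x)+\alpha_0(y,y)n(y)=r(y)$. The first step is to rule the interior point out. By Cramer's rule its coordinates have numerators $r(x)\alpha_0(y,y)-\alpha_0(x,y)r(y)$ and $\alpha_0(x,x)r(y)-r(x)\alpha_0(y,x)$. The hypothesis $\bar f(x,y)<0$ makes the first numerator negative, while $\bar f(y,x)>0$ makes the second positive; since both share the common denominator $\alpha_0(x,x)\alpha_0(y,y)-\alpha_0(x,y)\alpha_0(y,x)$, whatever its sign one of the two coordinates is forced to be negative (and if that denominator vanishes the two nullclines are parallel and never meet). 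Hence there is no equilibrium in the open positive quadrant, and only the three boundary equilibria survive.

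Next I would compute the Jacobian $J\big(n(x),n(y)\big)$ of \eqref{eq:lv} and evaluate it at each boundary point. At $(0,0)$ it is diagonal with entries $r(x),r(y)>0$, so that point is an unstable node. At $(\bar n(x),0)$ the Jacobian is triangular with diagonal entries $-r(x)<0$ and $\bar f(y,x)>0$, so it is a saddle, hence unstable. At $(0,\bar n(y))$ it is again triangular, with diagonal entries $\bar f(x,y)<0$ and $-r(y)<0$; both eigenvalues are negative, so $(0,\bar n(y))$ is a locally asymptotically stable node. Comparing the three, $(0,\bar n(y))$ is the only equilibrium all of whose eigenvalues have negative real part, which gives the asserted conclusion at the level of local stability.

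Finally, to honour the hypothesis $n_0(x),n_0(y)>0$ and read the statement as genuine convergence of positive trajectories, I would upgrade local stability to a global result by a phase-plane argument. The open positive quadrant is forward-invariant (each axis is invariant, and the diagonal competition terms $\alpha_0(x,x),\alpha_0(y,y)>0$ keep every trajectory bounded), so the $\omega$-limit set of any positive orbit is nonempty and compact. The Dulac function $1/\big(n(x)n(y)\big)$ makes the divergence of the rescaled field equal to $-\alpha_0(x,x)/n(y)-\alpha_0(y,y)/n(x)<0$ on the quadrant, which by the Bendixson--Dulac criterion rules out periodic orbits; Poincar\'e--Bendixson then forces every $\omega$-limit set to consist of equilibria. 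Since the interior equilibrium is absent, the unstable node $(0,0)$ and the saddle $(\bar n(x),0)$ cannot attract interior orbits (the stable eigenvector of the saddle points along the invariant $x$-axis, so its stable manifold stays on that axis), whence every positive trajectory converges to $(0,\bar n(y))$. I expect this global step to be the main obstacle: the eigenvalue bookkeeping is routine, whereas ruling out limit cycles and locating the saddle's stable manifold off the invariant axes is where the real care is needed.
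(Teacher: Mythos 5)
Your proposal is correct, and its core coincides with the paper's own proof: both enumerate the four equilibria, exclude the interior coexistence point because the hypotheses $\bar f(x,y)<0<\bar f(y,x)$ force its Cramer numerators (which equal $\alpha_0(y,y)\bar f(x,y)$ and $\alpha_0(x,x)\bar f(y,x)$) to have opposite signs, and then read off the eigenvalues of the triangular Jacobians at $(0,0)$, $(\bar n(x),0)$ and $(0,\bar n(y))$, obtaining source, saddle, and stable node respectively. Where you go beyond the paper is the last step: the paper stops at this local classification, whereas you upgrade it to global convergence of all positive trajectories using the Dulac function $1/\bigl(n(x)n(y)\bigr)$, Poincar\'e--Bendixson, and the observation that the saddle's stable manifold lies in the invariant $n(x)$-axis. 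This strengthening is not required by the literal statement, but it is what the paper implicitly invokes later (in the proof of Proposition \ref{TST}, the dimorphic system started from interior data is claimed to be \emph{attracted} into an $\eta$-neighbourhood of $\bigl(0,\bar n(y)\bigr)$), so your extra paragraph adds real value rather than padding. You are also slightly more careful on one minor point: the degenerate case where $\alpha_0(x,x)\alpha_0(y,y)=\alpha_0(x,y)\alpha_0(y,x)$ (parallel nullclines, hence no interior equilibrium at all) is handled explicitly in your argument, while the paper divides by this determinant without comment.
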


The proof is given in Appendix.

\subsection{Development with sequential substitutions}\label{subsec:TSS_mutation}
This subsection takes mutation into consideration. This allows us to recover the well-known process consisting in a sequence of innovations, each followed by a short-term competition/substitution process introduced in the previous subsection \citep{Champnt2006}. In this model, innovations occur only rarely and do not take place on the same timescale as competition and substitution. In order to identify the proper range of mutation timescale in this sequential innovation-substitution model, let us replace the mutation rate $\mu(x)$ by $\sigma\mu(x)$. Besides, define  the \emph{fitness function} of trait $x$ with respect to $y$ by
\begin{equation*}
\bar f(x,y)=b(x)-d(x)-\alpha(x,y)\bar{n}(y).
\end{equation*}
The following assumption gives a non-coexistence condition for any pair of distinct traits.
\begin{assumption}\label{assp2}
For all $x,y$ in $\X$, $\bar f(x,y)\cdot \bar f(y,x)<0$.
\end{assumption}

Champagnat \cite[Theorem 1]{Champnt2006} proved the following result.
\begin{proposition}[\citealp{Champnt2006}]\label{TSS_Champagnat}
Consider a sequence of processes $(X^K_t)_{t\geq 0}$, $K\in \mathbb{N}^*$, with rescaled mutation law $\sigma\mu(x)$. Suppose that the initial points $X_0^K=\frac{N_0^K}{K} \delta_x$ satisfy that $\frac{N_0^K}{K}\stackrel{\text{law}}{\to} n_0$ as $K\rightarrow +\infty$, where $n_0$ is a strictly positive constant. Also suppose that, for all $C>0$,
\begin{equation}\label{Champagnat_condition}
\exp\{-CK\}\ll K\sigma \ll \frac{1}{\ln K}.
\end{equation}
Then, the trajectory $\{X^K_{t/K\sigma}\}_{ t\geq 0}$ converges in probability to $\{Y_t\}_{t\geq 0}$ such that
\begin{equation*}
Y_t=
  \begin{cases}
   n_0\delta_x, & t=0, \\
   \bar{n}(\eta_t)\delta_{\eta_t}, & t>0,
 \end{cases}
\end{equation*}
where $\{\eta_t\}_{t\geq 0}$ with $\eta_0=x$ is a Markov jump process in trait space $\X$ with jump rate from any $z\in \X$ to $y\in \X$
\begin{equation*}
          \bar{n}(z)\frac{[\bar f(y,z)]_+}{b(y)}m(z,dy).
\end{equation*}
Here $[\bar f(y,z)]_+$ denotes the positive part of $\bar f(y,z)\in \R$.
\end{proposition}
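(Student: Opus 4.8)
The plan is to prove this as a limit of three separated timescales --- the \emph{ecological} (competition) scale of order $1$, the \emph{invasion} scale of order $\ln K$, and the \emph{mutation} scale of order $1/(K\sigma)$ --- and to show that on the accelerated clock $t/(K\sigma)$ only the endpoints of each invasion remain visible. The two inequalities in \eqref{Champagnat_condition} are exactly what makes this work: the upper bound $K\sigma \ll 1/\ln K$ forces the time $1/(K\sigma)$ between successive mutations to dwarf the $O(\ln K)$ duration of a single invasion, so mutations never overlap; the lower bound $e^{-CK}\ll K\sigma$ forces a mutation to arrive before the resident population can be driven to extinction by the $O(e^{CK})$ large-deviation fluctuations of the underlying birth-death chain. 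First I would use the strong Markov property to decompose the trajectory into monomorphic phases separated by mutation arrivals, so that it suffices to analyse a single step starting from a resident population concentrated near $\bar n(z)\,\delta_z$.

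For a single monomorphic phase I would first invoke Proposition \ref{prop:logistic}: before any mutation the sub-population of the resident trait $z$ obeys, to leading order in $K$, the logistic equation \eqref{eq:log}, whose solution relaxes exponentially fast to the stable equilibrium $\bar n(z)$. I would then upgrade this law of large numbers to a confinement estimate, showing that the rescaled resident density stays in an arbitrarily small neighbourhood of $\bar n(z)$ for a time of order $e^{CK}$ with probability tending to $1$. This is a Freidlin--Wentzell / large-deviation statement for the logistic birth-death chain, and it is here that the lower bound in \eqref{Champagnat_condition} guarantees that a mutation occurs while the resident is still sitting at its equilibrium.

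The heart of the argument is the fate of a freshly born mutant $y$ against a resident background at density $\bar n(z)$. While the mutant sub-population has size $o(K)$ it does not feel its own competition and the resident stays frozen near $\bar n(z)$, so the mutant count is sandwiched between two linear birth-death processes with birth rate $b(y)$ and death rate $d(y)+\al(y,z)\bar n(z)$, whose Malthusian parameter is precisely $\bar f(y,z)$. A classical branching computation then shows the mutant line dies out unless $\bar f(y,z)>0$, in which case it survives the early stochastic phase with probability converging to $[\bar f(y,z)]_+/b(y)$. Multiplying this survival probability by the rate $\bar n(z)\,m(z,dy)$ at which mutants of trait $y$ appear on the clock $t/(K\sigma)$ (the factor $K\sigma$ being cancelled by the time change) yields the announced jump rate $\bar n(z)\,[\bar f(y,z)]_+/b(y)\,m(z,dy)$. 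Conditionally on survival, the mutant reaches macroscopic size in time $O(\ln K)$ and the two traits thereafter evolve by the Lotka--Volterra system \eqref{eq:lv}; since $\bar f(y,z)>0$, Assumption \ref{assp2} forces $\bar f(z,y)<0$, and Proposition \ref{prop:stable analysis of LV} identifies $(0,\bar n(y))$ as the unique stable equilibrium, so the invader completely replaces the resident and the population returns to a monomorphic equilibrium at $y=\eta$. Iterating the renewal decomposition identifies $(\eta_t)$ as the Markov jump process with the stated rates, and tightness together with finite-dimensional convergence gives convergence in probability of the whole rescaled trajectory to $Y_t$.

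The main obstacle I anticipate is making the branching coupling of the previous paragraph \emph{quantitative and uniform} over the entire invasion window: one must control the mutant population simultaneously from above and below by linear processes while the resident is only approximately at $\bar n(z)$, show that the error in the survival probability vanishes as $K\to\infty$, and then hand over cleanly to the deterministic Lotka--Volterra phase at an intermediate size of order $\varepsilon K$. Dovetailing the stochastic small-population regime, the branching approximation, and the deterministic large-population regime --- all of which must fit inside the $O(\ln K)$ invasion time permitted by \eqref{Champagnat_condition} --- is the delicate technical core, and it is precisely the content of Champagnat's estimates in \cite{Champnt2006}.
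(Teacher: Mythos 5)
Your proposal is correct and follows essentially the same route as the paper, which does not reprove this result but cites Champagnat's Theorem 1 and sketches exactly the ideas you develop: separation of the $O(1)$ ecological, $O(\ln K)$ fixation, and $O(1/(K\sigma))$ mutation timescales, with the lower bound $e^{-CK}\ll K\sigma$ playing the Freidlin--Wentzell role of keeping the resident population at its equilibrium until the next mutation arrives. Your branching-process computation of the invasion probability $[\bar f(y,z)]_+/b(y)$, the Lotka--Volterra replacement phase justified via Assumption \ref{assp2} and Proposition \ref{prop:stable analysis of LV}, and the renewal decomposition into monomorphic phases are precisely the ingredients of Champagnat's argument that the paper invokes.
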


 \begin{figure}[hbtp]
 \centering
\includegraphics[width=220pt]{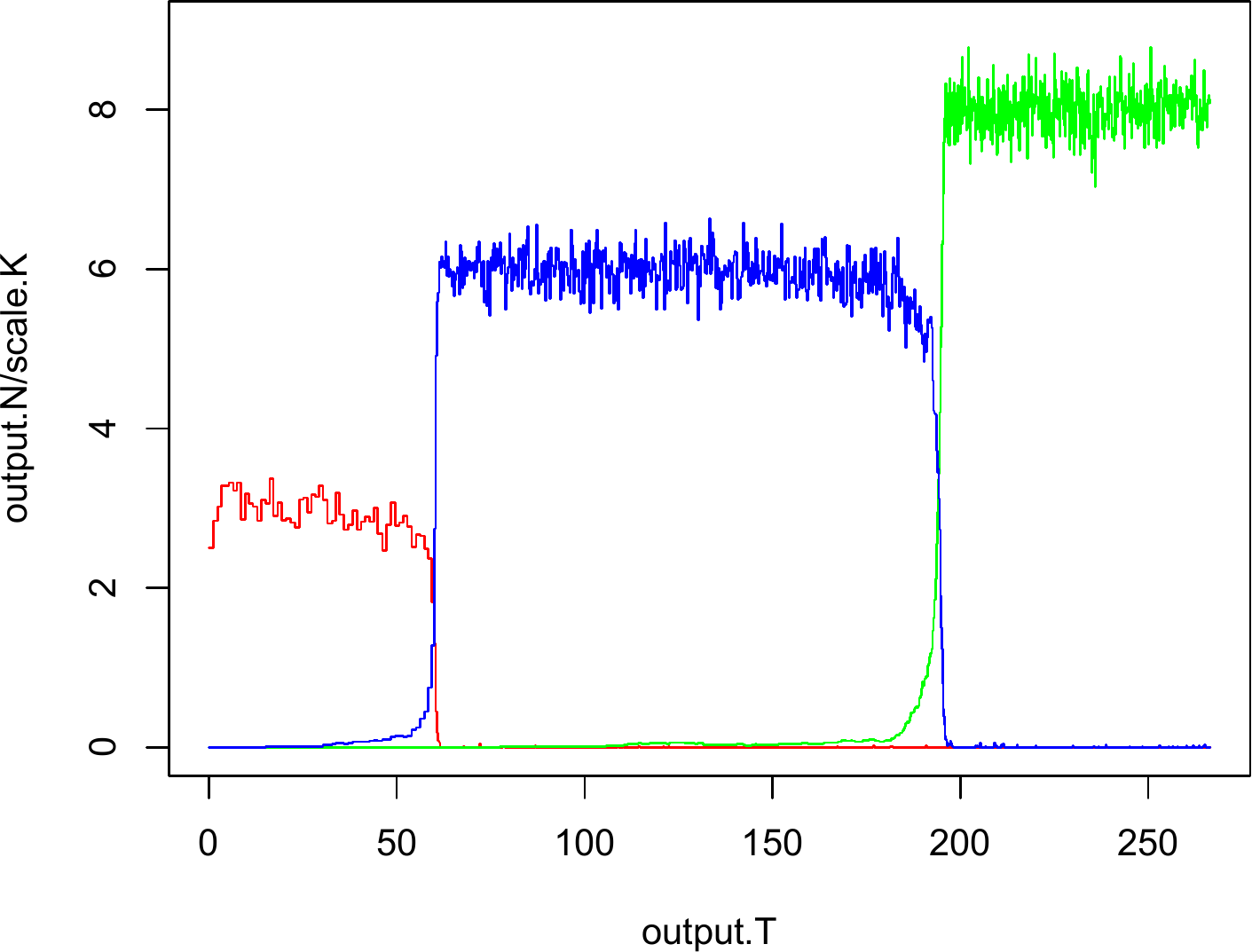}
\includegraphics[width=220pt]{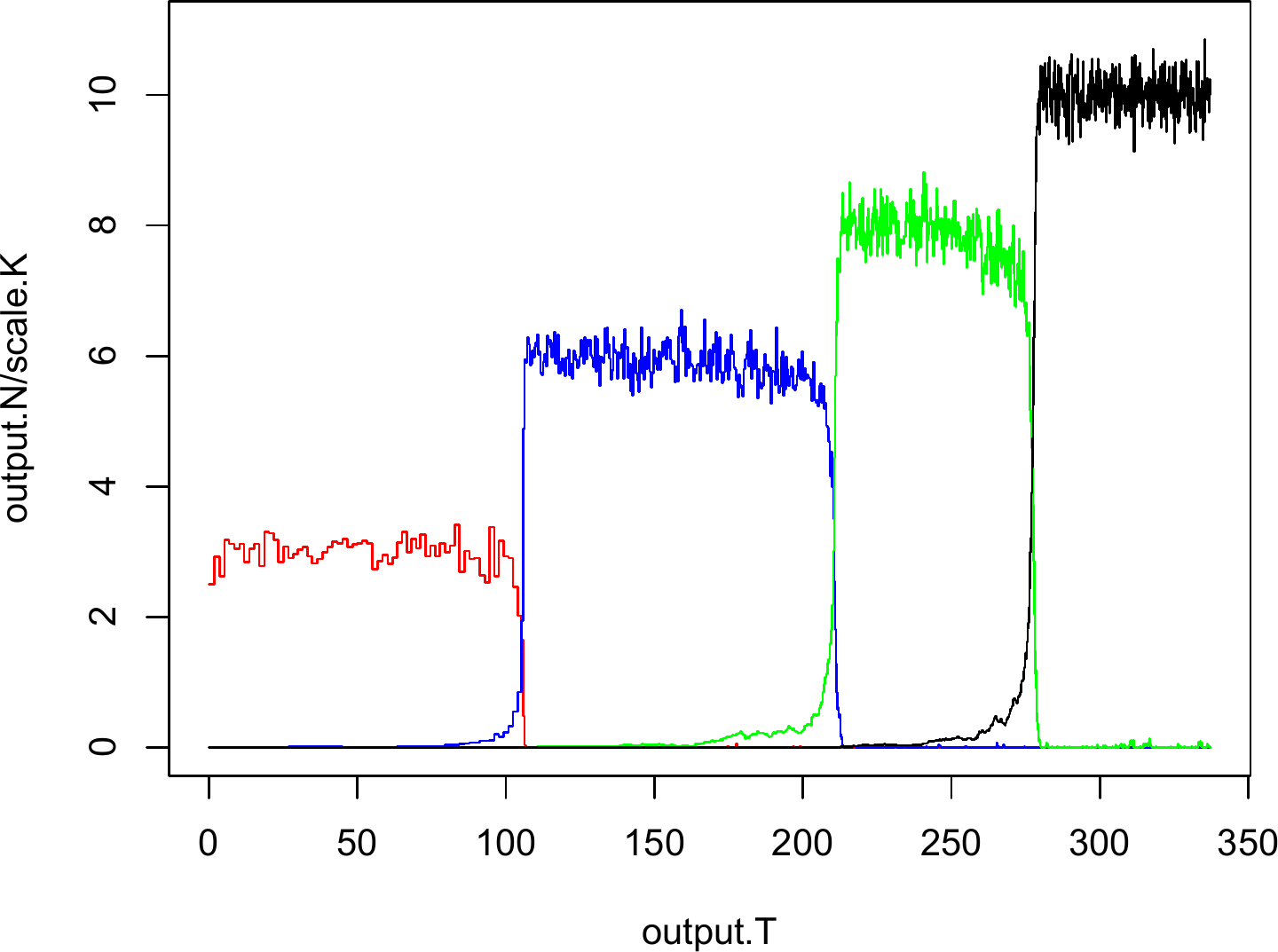}
\caption{\footnotesize Simulations of the monomorphic substitution sequence model arising in Proposition \ref{TSS_Champagnat}.}
\label{fig:TSS}
 \end{figure}

His arguments of proof gave rigorous mathematical justification of the following ideas. The timescale for the occurrence of mutations is of order $\frac{1}{K\sigma}$, whereas the fixation timescale (for the competition between a pair of traits to attain an equilibrium) is of order $\ln K$. According to \eqref{Champagnat_condition}, the interval between two occurrences of mutation is much longer than the fixation timescale.\footnote{However, $e^{-CK} \ll K\sigma$ guarantees that the mutation is not so rare that the fixed monomorphic population does not drift out of its unique stable equilibrium (cf. Freidlin and Wentzell \cite{FW84}).}  Proposition \ref{TSS_Champagnat} implies that, in this case, the population trait appears monomorphic (which means that all the individuals bear the same trait) at any moment on the mutation timescale. In other words, when a new trait appears by mutation, its competition with the existing one takes place very quickly and one of them is beaten and vanishes. Then all the individuals bear the same trait, namely the one that survives from the competition. This lasts for a long time until another mutation takes place.

Figure \ref{fig:TSS} shows some simulations of a sequential substitution model, with time on the x-axis and population on the y-axis . The trait space contains respectively three (resp. four) types for the left (resp. right) panel. The population densities of trait $x_0, x_1, x_2, x_3$ are marked respectively by red, blue, green and black colored curves. Take $b(x_0)=3$, $b(x_1)=6$, $b(x_2)=8$, $b(x_3)=10$ and death rates $d(x_i)\equiv 0$ for $i=0,1,2,3$. Other parameters are: competition kernel $\al\equiv 1$, migration kernel $m\equiv 0.5$, and mutation scale $\sigma=K^{-2}$, with initial population size $K=100$. One can see that the competition process of two successive alternatives, which leads to the fixation of the fitter and the vanishing of the less fit, takes very little time compared with the time interval between two innovations. Therefore, the typical double $S$-curves of the Lotka-Volterra equation \eqref{eq:lv} are almost invisible. One has to ``stretch'' the time line at each ``competition moment'' to recover them.

\section{Long-term development tree \label{sec:long}}
\subsection{Migration, local competition, diversification}
In the previous section, an individual always bears the same trait from its birth to its death. For example, when a firm enters the market and adopts a certain technology, it never changes that choice until it quits the market. However, before a firm with poor performance quits the market because it is losing market share to its competitors, it may well try to improve the performance by changing the technology. If it adopts its better performing competitor's technology, it is ``imitating''. If it tries some other technology randomly, it is doing a ``trial''. In the terminology of birth-death process models, this is called \emph{migration}.

In this section, by allowing migration of individuals, we extend the previous sequential substitution model into a tree-form development model. On the one hand, migration from an old trait to a new trait accelerates the diffusion of the new one in the population; on the other hand, migration from a new trait to an old trait prevents the old from completely vanishing. At any period, somebody will trial to revive an alternative, and depending on how competitive it is in the current market structure, it can make a comeback. Therefore, instead of having a ``monomorphic'' market all the time as in the sequential substitution model, several traits can coexist in the population under the refined model. In the long term, we obtain a ``development tree'' of the alternatives, with a diversity of sustainable alternatives.


Studying more than two alternatives at the same time brings another issue. When several alternatives coexist, does each of them compete with all of the others? Is there migration between each pair of alternatives? In this paper, we assume that the competition and the migration are rather local than global. For example, as means of transportation, bicycles compete with cars, cars compete with trains, trains compete with airplanes, but bicycles do not compete with airplanes. It is precisely because of this assumption that we can obtain comebacks in equilibrium: if the direct competitor of an alternative disappears, it is possible for this alternative to prosper again.

For these reasons, this paper considers only the competition and migration between traits with fitnesses that are ``close'' to each other. While in each specific context, an alternative may compete with several alternatives that are more or less close to it, we consider a particular case where each trait only compete with the two ``nearest neighbors'' in terms of their fitness. This makes the explicit form of the ``development tree'' tractable, and illustrates the main idea of the model.

In Section \ref{subsec:migration1}, we consider the case without mutation, in order to see how migration changes the process of short-term competition and substitution. Section \ref{subsec:migration2} extends the model to a long-term tree process by taking mutation, i.e. innovation, into consideration. Unlike in the previous section, where we apply established mathematical results to our context, in this section we develop new mathematical tools in probability theory in order to construct the tree model.

\subsection{Short-term competition model with migration}\label{subsec:migration1}
Let us introduce migration into the basic mutation-free model specified in Section \ref{section:preliminary}. Denote by $m(x, dy)$ the transition law for an individual migrating from trait $x$ to trait $y$. In this subsection, we assume that there is no mutation, $\mu(x)\equiv 0$. Suppose that competition and migration take place only between two alternatives that are neighbors on the \emph{fitness landscape}. Explicitly, for a given trait space $\X=\{x_0,x_1,x_2,\cdots,x_{L}\}$ containing $L+1$ distinct traits ($L\in \mathbb{N}$), admit the following assumption:

\begin{assumption}\label{assp3} The $L+1$ traits are ordered on an increasing fitness landscape, i.e. $x_0\prec x_1\prec\ldots\prec x_L$, where the order $\prec$ is defined as follows:\footnote{Note that the order is not a total order because it doesn't have the transitivity and comparability property. For instance, $\prec$ doesn't hold  between $x_1$ and $x_3$ even if $x_1\prec x_2$ and $x_2\prec x_3$.}
\begin{equation}\label{eq:order}
x \prec y \; \text{ if } \, \bar f(x , y)<0\,\text{ and }\,\bar f(y, x)>0.
\end{equation}

In addition,
\begin{equation*}
    m(x_i,x_j)=\al(x_i,x_j)=0,\quad \forall \mid i-j\mid>1.
\end{equation*}
\end{assumption}

Hence, trait $x_3$ competes with trait $x_2$, but not with trait $x_1$, and each trait is ``fitter'' than the previous one. We would like to study the birth-death process with such traits and local competition and migration behavior. Before stating Proposition \ref{TST}, let us introduce some notation and a technical assumption.

Recall that $\bar{n}(x)$ is the rest point of the logistic equation \eqref{eq:log} for trait $x$ (cf.  Proposition \ref{prop:logistic}). Following \cite{BovierWang2013}, define a specific configuration $\Gamma^{(L)}$ by
\begin{equation}\label{eq:gamma}
\Gamma^{(L)}:=
\begin{dcases}
\,\sum\limits_{i=0}^l\bar{n}(x_{2i})\de_{x_{2i}},\; &\text{ if }\, L=2l.\\
\,\sum\limits_{i=1}^{l+1}\bar{n}(x_{2i-1})\de_{x_{2i-1}}, &\text{ if }\, L=2l+1.
\end{dcases}
\end{equation}
In other words, when the traits are ordered by their fitness, at $\Gamma^{(L)}$, only every other trait is present, and its size is its equilibrium size with only intra-trait competition but not inter-trait competition.

\begin{proposition}\label{TST}
Consider the process $(X_t^K)_{t\geq 0}$ with rescaled migration law $\epsilon m(x, dy)$ on the trait space $\X=\{x_0,x_1,\ldots,x_L\}$. Suppose that for all $K$, $X_0^K=\frac{N_0^K}{K} \delta_{x_0}$, and $\frac{N_0^K}{K}\to n_0$ in law as $K\rightarrow +\infty$, where $n_0>0$. If
\begin{equation}\label{rare_migration}
1\ll K\epsilon\ll K,
\end{equation}
then, there exists a constant $\bar t_L>0$ such that,
\begin{equation*}
\lim\limits_{K\to\infty} X^K_{t\ln\frac{1}{\epsilon}}\stackrel{\text{(d)}}{=}\Gamma^{(L)}, \quad \forall\, t>\bar t_L,
\end{equation*}
under the total variation norm.
\end{proposition}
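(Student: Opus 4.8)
The plan is to exploit a separation of timescales exactly as in the trait substitution sequence of \cite{Champnt2006}, but with migration playing the role of mutation. Under the regime $1\ll K\epsilon\ll K$, every sub-population that is ever ``switched on'' by migration carries at least of order $K\epsilon\to\infty$ individuals: a trait $x_i$ held down by a fitter neighbour is continuously re-seeded at rate $\sim\epsilon\,m(x_{i\pm1},x_i)\langle X^K,1_{\{x_{i\pm1}\}}\rangle$ and therefore sits at density of order $\epsilon$, i.e. population of order $K\epsilon$. Consequently no relevant sub-population visits the $O(1)$ regime where genuine stochastic extinction could occur, and the density process $X^K$ can be compared, by a law-of-large-numbers estimate of Fournier--M\'el\'eard type \cite{FM04}, to the solution $n^\epsilon$ of the deterministic Lotka--Volterra system \eqref{eq:lv} augmented by the small migration fluxes $\epsilon\,m$. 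The lower bound $K\epsilon\gg1$ is what removes the delicate branching-process/extinction step and replaces the ``not too rare'' condition $e^{-CK}\ll K\sigma$ of Proposition \ref{TSS_Champagnat}; the upper bound $\epsilon\ll1$ makes the migration perturbation singular, separating the $O(1)$ logistic relaxation time from the invasion clock $\ln\frac1\epsilon$.

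The analysis then reduces to a singular-perturbation study of $n^\epsilon$ on the slow timescale $t\ln\frac1\epsilon$, for which I would first record two elementary building blocks. Locally, whenever two neighbouring traits $x_i\prec x_{i+1}$ are both at macroscopic density, Proposition \ref{prop:stable analysis of LV} forces competitive exclusion of the less fit one, so the fast flow drives the loser towards $0$ while the winner relaxes to its logistic value $\bar n$ from Proposition \ref{prop:logistic}. Globally, a trait $y$ that is currently suppressed but has strictly positive invasion fitness against the present residents is seeded by migration at density of order $\epsilon$ and then grows exponentially at that positive rate; hence it needs a time of order $\ln\frac1\epsilon$ to climb from its migration-supported level $O(\epsilon)$ up to the macroscopic level $O(1)$ and trigger the next exclusion. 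This is precisely why the natural clock of the process is $\ln\frac1\epsilon$, and why each invasion consumes a fixed amount of slow time.

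With these blocks I would run an induction along the fitness chain $x_0\prec x_1\prec\cdots\prec x_L$. Starting from $\bar n(x_0)\de_{x_0}$, migration seeds $x_1$, which invades and suppresses $x_0$; then $x_1$ seeds $x_2$, which suppresses $x_1$, at which point $x_0$ --- no longer held down, since its only competitor $x_1$ has collapsed --- makes its comeback to $\bar n(x_0)$. Iterating, the invasion front travels from $x_0$ to $x_L$, and because $x_L$ has no fitter right-neighbour to displace it the front terminates there; tracing the alternating ``present/suppressed'' pattern back from $x_L$ pins down exactly the parity-$L$ configuration $\Gamma^{(L)}$ of \cite{BovierWang2013}. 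A direct fitness computation shows $\Gamma^{(L)}$ is asymptotically stable for the perturbed flow --- each absent trait $x_{j\pm1}$ has strictly negative invasion fitness against the pair of present neighbours, the term $-\al(x_{j\pm1},x_j)\bar n(x_j)$ coming from $x_j\succ x_{j\pm1}$ already making $\bar f$ negative --- so once reached it is not left. Taking $\bar t_L$ to be the total slow time accumulated over the $O(L)$ invasion phases and invoking this stability for $t>\bar t_L$ identifies the limit; since $\X$ is finite, convergence of the density vector in $\ell^1$ is exactly convergence in total variation, and the law-of-large-numbers concentration upgrades it to convergence in distribution to the deterministic $\Gamma^{(L)}$.

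The hard part is the uniform control of the deterministic approximation over the \emph{diverging} horizon $[0,\bar t_L\ln\frac1\epsilon]$: the Fournier--M\'el\'eard estimate is designed for fixed time windows, whereas here the window grows with $\epsilon$ (hence with $K$). I would handle this phase by phase, restarting the comparison at the end of each invasion and using that the fast flow is contracting near the stable logistic and coexistence equilibria, so the accumulated $\ell^1$ error stays $o(1)$ rather than blowing up; the Freidlin--Wentzell exit-time bounds and the metastability machinery of \cite{BovierWang2013} are exactly what make the ``present'' and ``suppressed'' states persist on the slow clock. A secondary difficulty is excluding spurious invasions --- showing that a trait scheduled to remain suppressed never accidentally crosses the invasion threshold while a neighbour is in transition --- which again rests on the strict sign of the community invasion fitness guaranteed by Assumption \ref{assp3}.
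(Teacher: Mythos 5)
Your overall strategy coincides with the paper's: a phase-by-phase analysis in which migration seeds each suppressed trait (population of order $K\epsilon\gg 1$, so law-of-large-numbers comparisons apply and stochastic extinction never enters), the seeded trait climbs exponentially at its invasion-fitness rate so that each invasion costs a time of order $\ln\frac{1}{\epsilon}$, Proposition \ref{prop:stable analysis of LV} settles each pairwise exclusion in $O(1)$ time, and the abandoned trait recovers once its unique competitor collapses. The paper's proof (written out for $L=2$ via the stopping times $S_1^{\epsilon}$, $S_1^{\eta}$, $\widetilde{S}_1^{\eta}$, $S_2^{\eta}$, $\widetilde{S}_2^{\eta}$, $S_0^{\eta}$) restarts the deterministic comparison at each phase exactly as you propose, so your treatment of the diverging time horizon is the right one and matches the paper.

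There is, however, one concrete gap. You claim that the exclusion of ``spurious invasions'' --- a trait crossing the macroscopic threshold ahead of schedule --- ``rests on the strict sign of the community invasion fitness guaranteed by Assumption \ref{assp3}.'' Sign conditions alone cannot deliver this. While $x_1$ is still climbing from level $\epsilon$ to level $\eta$, its mass is $o(1)$, so the freshly seeded trait $x_2$ (whose only competitor, by nearest-neighbour interaction, is $x_1$) feels essentially no competition and grows at its full rate $b(x_2)-d(x_2)$, which is strictly larger than the invasion fitness $\bar f(x_2,x_1)$ and may exceed $\bar f(x_1,x_0)$ by an arbitrary amount; nothing in Assumption \ref{assp3}, which constrains only signs and not magnitudes, prevents $x_2$ from reaching level $\eta$ before $x_1$ has fixed, which would destroy the sequential phase structure your induction relies on. The paper closes exactly this hole with a quantitative rate condition, invoked as (B3) in Step 2 of its proof: $\frac{2}{b(x_2)-d(x_2)}\geq \frac{1}{\bar f(x_1,x_0)}+\frac{1}{\bar f(x_2,x_1)}$. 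A secondary, more benign imprecision: a suppressed trait does not simply sit at density $O(\epsilon)$; it decays exponentially from $\eta$ and is merely floored at $\epsilon$ by migration, so at the start of its comeback it is at level $O(\epsilon^{c_1}\vee\epsilon)$ with $c_1=\frac{|\bar f(x_0,x_1)|}{\bar f(x_2,x_1)}\wedge 1$, and it is this exponent that enters the paper's recovery-time estimate and hence the explicit constant $\bar t_L$; your version (starting from $\epsilon$) would still yield a finite constant, just not the paper's.
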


The proof is in the Appendix. If, in a large population, the frequency of migration events is much higher than that of birth and death events, but not high enough for the whole population to be migrating (cf. \ref{rare_migration}), then the outcome of the nearest-neighbor competition and migration is the elimination of every other trait. The remaining traits do not compete with each other and there is no migration between them. Therefore, the size of the subpopulation bearing a certain trait is just the equilibrium size of the trait if it is the only trait available, i.e. the one  determined by \eqref{eq:log}.

 \begin{figure}[hbtp!]
 \centering
\includegraphics[width=220pt]{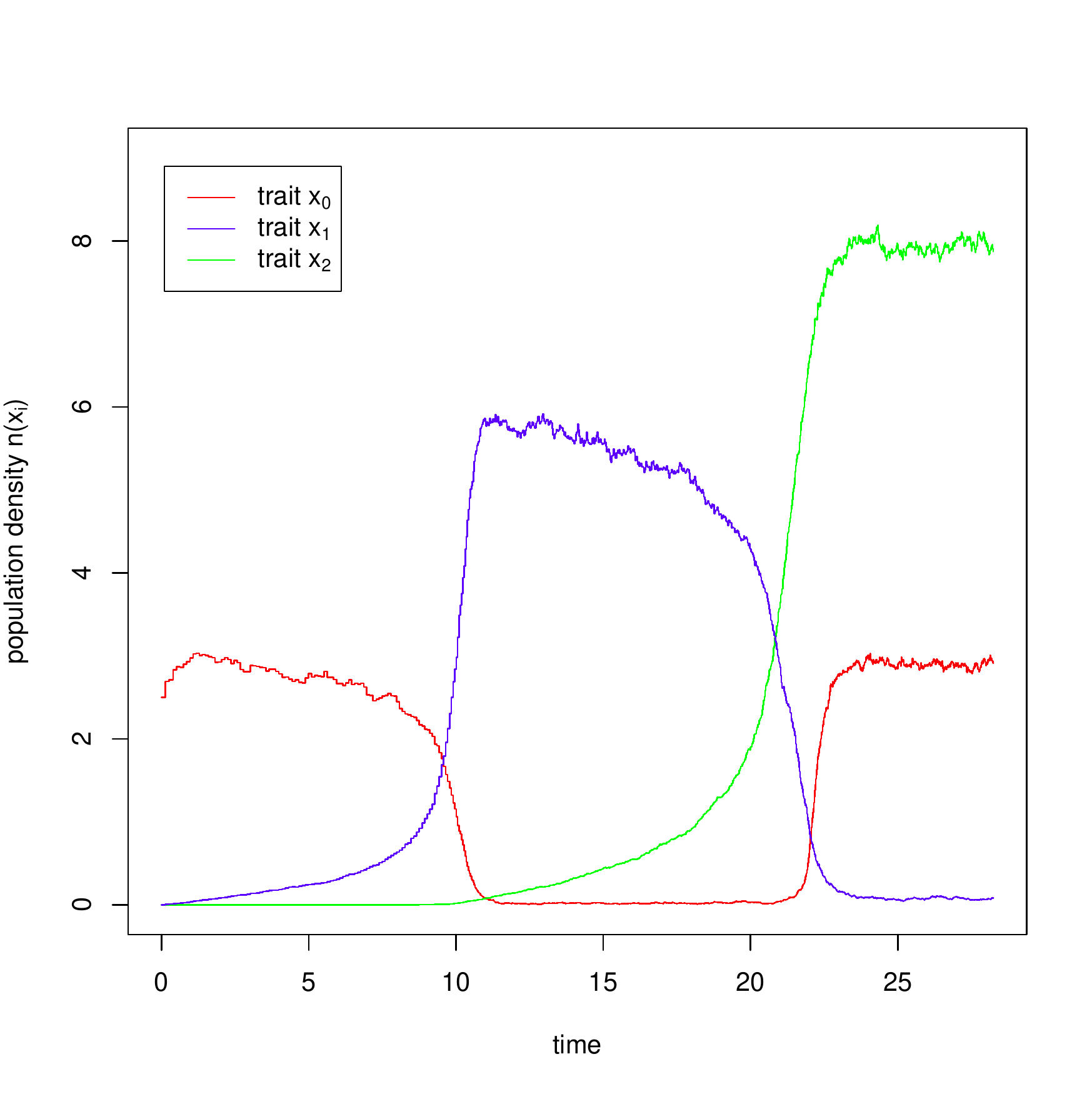}
\includegraphics[width=220pt]{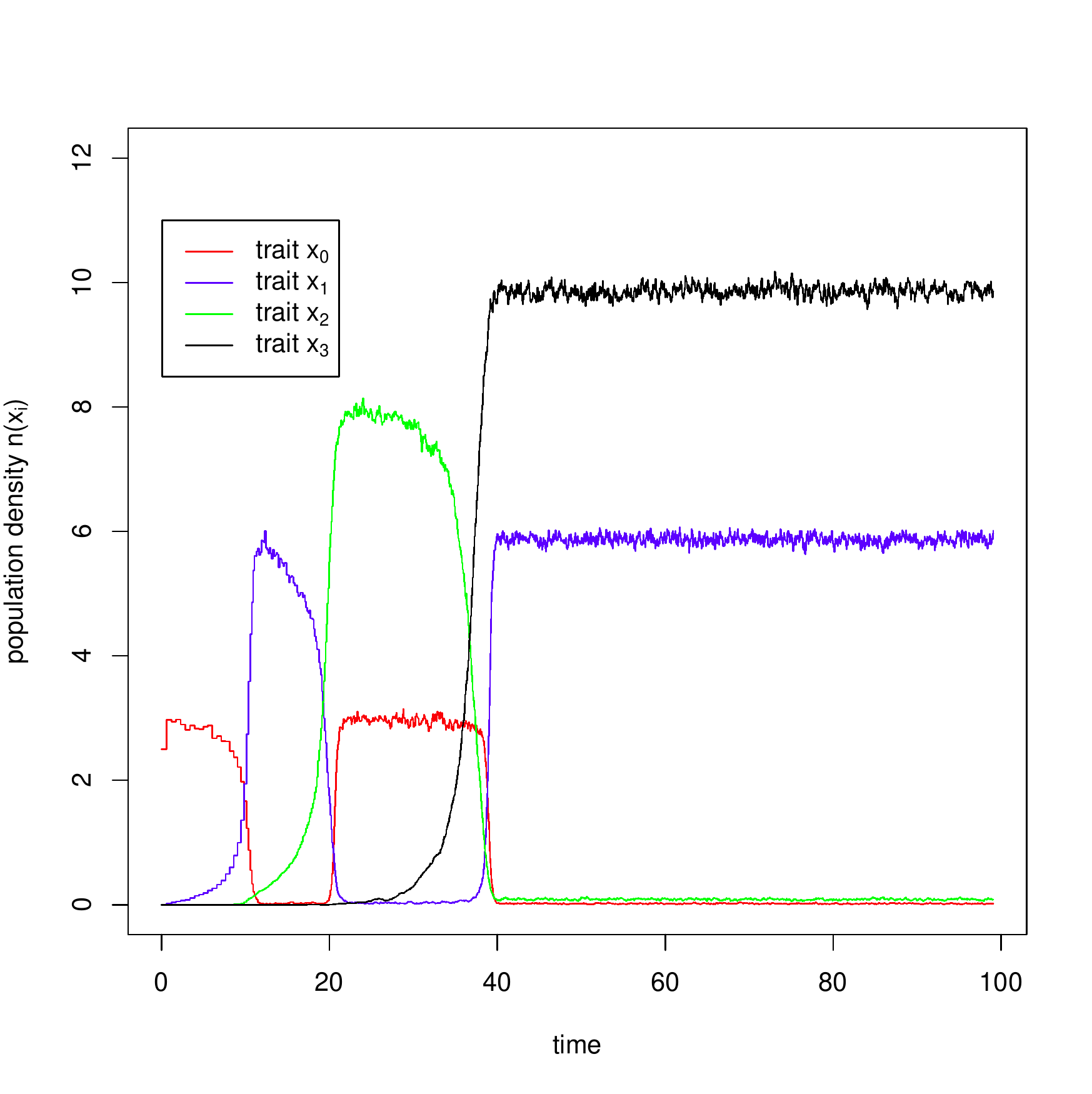}
\caption{\footnotesize Simulations of a development tree model arising in Proposition \ref{TST} on a three- and four-type trait space. }
\label{fig:TST}
 \end{figure}

Two numerical simulations are illustrated in Figure \ref{fig:TST}. The parameters are the same as for Figure \ref{fig:TSS} except that the initial population size is now $K=1000$. In addition,  $\epsilon=K^{-\frac{4}{5}}$ in the three-type case (LHS) and $\epsilon=K^{-\frac{3}{4}}$ in the four-type case (RHS). One can verify that condition \eqref{rare_migration} holds in both cases. According to Proposition \ref{TST}, the fixation timescale (i.e. the time after which the population can be approximated by $\Gamma^{(L)}$) is of order $\ln\frac{1}{\epsilon}$. The limit, stable configuration is $\Gamma^{(2)}=3\de_{x_0}+8\de_{x_2}$ for the three-type case and $\Gamma^{(3)}=6\de_{x_1}+10\de_{x_3}$ for the four-type case.

\subsection{Long-term development tree model with mutation} \label{subsec:migration2}
In this subsection, we incorporate mutation into the model and, in particular, focus on the joint effect of migration and mutation on different timescales. It turns out that a trait that almost dies out on the migration timescale has a chance to recover itself and further to be stabilized on the mutation timescale. This phenomenon results from a change in the fitness landscape, which is due to the arrival of a mutant.

As in Section \ref{subsec:TSS_mutation}, mutation takes place according to the mutation kernel $p(x,dh)$. Due to this law, a previously non-presenting trait appears from time to time, and enlarges the space of currently presenting traits. On the contrary, the migration kernel $m(x, dy)$ only acts on the space of currently presenting traits. The asymptotic behavior of the birth-death process with migration and mutation depends on the timescales of both events. Let us rescale the mutation law by $\sigma$ and the migration law by $\epsilon$.

First we need some assumptions and the definition of the trait substitution tree model, following \cite{BovierWang2013}.

\begin{assumption}\label{assp5}
1.  For any given distinct traits $\{x_0, x_1,\cdots, x_n\}\subset\X, n\in\N$, there exists a unique total order permutation
\begin{equation*}
x_{n_0}\prec x_{n_1}\prec\cdots\prec x_{n_{n-1}}\prec x_{n_n}.
\end{equation*}.

2. When there are $n+1$ distinct traits currently presenting, they are labeled by $x_0^{(n)}, x_1^{(n)} , \ldots, x_n^{(n)}$ such that $x_0^{(n)}\prec x_1^{(n)}\prec\cdots\prec x_n^{(n)}$. When a new trait $x$ appears, the $n+2$ presenting traits are relabeled by $x_0^{(n+1)}, x_1^{(n+1)} , \ldots, x_{n+1}^{(n+1)}$ such that $x^{(n+1)}_0\prec x^{(n+1)}_1\prec\cdots\prec x^{(n+1)}_n\prec x^{(n+1)}_{n+1}$.

3. Competition and migration only occurs between nearest neighbor and we force the interacting kernels to vanish between non-nearest neighbor sites, i.e. for all $n$, $m(x_i^{(n)},x_j^{(n)})=\al(x_i^{(n)},x_j^{(n)})\equiv 0$ for $\mid i-j\mid>1$. Hence, we retain the order $\prec$ as defined in Assumption \ref{assp3}.
\end{assumption}

At the end of the previous subsection, we point out that, on the migration timescale, there is a variety of paths to approach the equilibrium configuration. However, the equilibrium configuration of any $L$-trait system without mutation is always the same, i.e. $\Gamma^{(N)}$ defined by \eqref{eq:gamma}, and the timescale for convergence is always of order $O(\ln\frac{1}{\epsilon})$ as shown in Proposition \ref{TST}.
\begin{definition}\label{TST_Definition_infinite tree_micro}
A Markov jump process $\{\Gamma_t: t\geq 0\}$ is a \emph{trait substitution tree} (or TST for short) with ancestor $\Gamma_0=\bar n({x_0})\delta_{x_0}$ if the following hold:
\begin{description}
\item[(i)] For all $l\in \mathbb{N}$, all $k\in \{0, \ldots, l\}$, and all $h\in \X-x^{(2l)}_{2k}$, it jumps
    from configuration $\Gamma^{(2l)}=\sum_{i=0}^l\bar n(x^{(2l)}_{2i})\delta_{x^{(2l)}_{2i}}$, with transition rate $\bar n(x^{(2l)}_{2k})\mu(x^{(2l)}_{2k})p(x^{(2l)}_{2k},dh)$, to a new configuration $\Gamma^{(2l+1)}$ determined as follows:
    \begin{align*}
& \Gamma^{(2l+1)} =\\
& \begin{dcases}
 \sum_{i=1}^j \! \bar{n}(\!x^{(2l)}_{2i\!-\!1}\!) \delta_{x^{(2l)}_{2i\!-\!1}} \!+\! \bar{n}(\!x^{(2l)}_{2k}\!+\!h\!) \delta_{x^{(2l)}_{2k}\!+\!h} \!+\! \sum_{i=j+1}^l \! \bar{n}(\!x^{(2l)}_{2i}\!) \delta_{x^{(2l)}_{2i}}, \text{ if } \exists 0 \!\leq\! j \!\leq\! l \text{ s.t. } x^{(2l)}_{2j} \!\prec\! x^{(2l)}_{2k}\!+\!h \!\prec\! x^{(2l)}_{2j\!+\!1},\\
\sum_{i=1}^j \!\bar{n}(\!x^{(2l)}_{2i\!-\!1}\!) \delta_{x^{(2l)}_{2i\!-\!1}} \!+\! \sum_{i=j}^l \!\bar{n}(\!x^{(2l)}_{2i}\!) \delta_{x^{(2l)}_{2i}},  \hspace{3.5cm} \text{ if } \exists 0 \!\leq\! j \!\leq\! l \text{ s.t. } x^{(2l)}_{2j\!-\!1} \!\prec\! x^{(2l)}_{2k}\!+\!h \!\prec\! x^{(2l)}_{2j}.
\end{dcases}
 \end{align*}
  The traits presenting in $\Gamma^{(2l+1)}$ are relabeled according to the total order relation: $x_0^{(2l+1)}\prec x_1^{(2l+1)}\prec\cdots\prec  x_{2l}^{(2l+1)}\prec x_{2l+1}^{(2l+1)}$.

\item[(ii)] For all $l\in \mathbb{N}$, all $k\in \{0, \ldots, l+1\}$, and all $h\in \X-x^{(2l+1)}_{2k-1}$, it jumps from configuration $\Gamma^{(2l+1)}=\sum_{i=1}^{l+1}\bar n(x^{(2l+1)}_{2i-1})\delta_{x^{(2l+1)}_{2i-1}}$, with transition rate $\bar n(x^{(2l+1)}_{2k-1})\mu(x^{(2l+1)}_{2k-1})p(x^{(2l+1)}_{2k-1},dh)$, to a new configuration $\Gamma^{(2l+2)}$ determined as follows:
{\scriptsize
\begin{align*}
& \Gamma^{(2l+2)}=\\
& \begin{dcases}
\sum_{i=1}^j \!\bar{n}(x^{(\!2l\!+\!1\!)}_{2i\!-\!2\!}) \delta_{x^{(\!2l\!+\!1\!)}_{2i\!-\!2}} \!+\! \bar{n}(x^{(\!2l\!+\!1\!)}_{2k\!-\!1}\!+\!h) \delta_{x^{(\!2l\!+\!1\!)}_{2k\!-\!1}\!+\!h} + \sum_{i=j+1}^{l+1} \bar{n}(x^{(\!2l\!+\!1\!)}_{2i\!-\!1}) \delta_{x^{(\!2l\!+\!1\!)}_{2i\!-\!1}}, \text{ if } \exists 1 \!\leq\! j \!\leq\! l \!+\! 1 \text{ s.t. } x^{(\!2l\!+\!1\!)}_{2j\!-\!1} \!\prec\! x^{(\!2l\!+\!1\!)}_{2k\!-\!1} \!+\! h \prec x^{(\!2l\!+\!1\!)}_{2j},\\
\sum_{i=1}^j \bar{n}(x^{(\!2l\!+\!1\!)}_{2i\!-\!2}) \delta_{x^{(\!2l\!+\!1\!)}_{2i\!-\!2}} + \sum_{i=j}^{l+1} \bar{n}(x^{(\!2l\!+\!1\!)}_{2i\!-\!1}) \delta_{x^{(\!2l\!+\!1\!)}_{2i\!-\!1}}, \hspace{3.1cm}\text{ if } \exists 1 \!\leq\! j \!\leq\! l\!+\!1 \text{ s.t. } x^{(\!2l\!+\!1\!)}_{2j\!-\!2} \prec x^{(\!2l\!+\!1\!)}_{2k\!-\!1}\!+\!h \!\prec\! x^{(\!2l\!+\!1\!)}_{2j\!-\!1}.
\end{dcases}
\end{align*}}
  The traits presenting in $\Gamma^{(2l+2)}$ are relabeled according to the total order relation: $x_0^{(2l+2)}\prec x_1^{(2l+2)}\prec\cdots\prec  x_{2l+1}^{(2l+2)}\prec x_{2l+2}^{(2l+2)}$.
\end{description}
\end{definition}

Roughly speaking, when there are $n+1$ traits in the current trait space and ordered on an increasing fitness landscape, with the nearest-neighbor competition and migration rule, every other trait is absent from the current configuration. When a new trait $x$ appears by mutation, put it into the queue of the $n+1$ old traits according to its fitness. If the nearest trait with higher fitness than $x$ is absent (resp. present) in the current configuration, then trait $x$ survives (resp. vanishes) in the new configuration. In both cases, the old traits with higher fitness than $x$ remain present or absent in the new configuration as before, while the old traits with lower fitness than $x$ switch their presence to absence or vice verse in the new configuration.

\begin{proposition}\label{TST_infinite_trait}
Consider the process $\{X^{K,\epsilon,\sigma}_t\}_{t\geq 0}$ with rescaled migration law $\epsilon m(x,y)$ and rescaled mutation law $\sigma \mu(x)$. Suppose that for all $K$, $X_0^{K,\epsilon,\sigma}= \frac{N^K_0}{K}\delta_{x_0}$, and $\frac{N^K_0}{K}\to \bar n(x_0)$ in law as $K\to\infty$. If, in addition to condition \eqref{rare_migration} that $1\ll K\epsilon\ll K$, one has
\begin{equation}\label{rare_mutation_condition}
\ln\frac{1}{\epsilon}\ll\frac{1}{K\sigma}\ll e^{KC}, \quad \forall  C>0,
\end{equation}
then $(X^{K,\epsilon,\sigma}_{t/K\sigma})_{t\geq 0}$ converges, as $K\to \infty$, to the trait substitution tree $(\Gamma_{t})_{t\geq 0}$ defined in Definition \ref{TST_Definition_infinite tree_micro} in the sense of f.d.d. on $\mathcal{M}_F(\mathcal{X})$ equipped with the topology induced by mappings $\nu\mapsto\langle\nu,f\rangle$ with $f$ a bounded measurable function on $\X$..
\end{proposition}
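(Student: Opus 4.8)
The plan is to prove the convergence by separating the three timescales present in the process, adapting the method of Champagnat \cite{Champnt2006} and Bovier--Wang \cite{BovierWang2013} to the simultaneous presence of migration and mutation. The fastest scale, of order $O(1)$, is the birth--death--competition dynamics, which fixes each present sub-population at the logistic equilibrium $\bar n(\cdot)$ of Proposition \ref{prop:logistic}; the intermediate scale $O(\ln\frac1\epsilon)$ is migration fixation, on which Proposition \ref{TST} drives any nearest-neighbour trait set to its alternating equilibrium $\Gamma^{(L)}$; the slowest scale $O(\frac1{K\sigma})$ is mutation, on which we observe the limit. As $(\Gamma_t)$ is a pure-jump Markov process on the countable family $\{\Gamma^{(n)}\}$, convergence of finite-dimensional distributions reduces to showing that, in time $t/K\sigma$, the process (i) occupies one of the configurations $\Gamma^{(n)}$ for all but a vanishing fraction of time, and (ii) leaves $\Gamma^{(n)}$ with exactly the rate and the deterministic transition map of Definition \ref{TST_Definition_infinite tree_micro}. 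I would carry this out through the generator, testing against functions $\nu\mapsto\langle\nu,f\rangle$ with $f$ bounded measurable so as to obtain convergence on $\mathcal M_F(\X)$.

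First I would prove the between-mutation relaxation. At a configuration $\Gamma^{(n)}$ the total mutation rate is $K\sigma\sum_z\bar n(z)\mu(z)=O(K\sigma)$, so the left inequality $\ln\frac1\epsilon\ll\frac1{K\sigma}$ of \eqref{rare_mutation_condition} makes the expected number of mutations during a window of length $\Theta(\ln\frac1\epsilon)$ tend to $0$. Throughout such a mutation-free window the process is exactly the migration model of Section \ref{subsec:migration1}, so Proposition \ref{TST} drives it to the unique stable equilibrium of the current trait set; the right inequality $\frac1{K\sigma}\ll e^{KC}$ then excludes, by a Freidlin--Wentzell estimate \cite{FW84}, any large-deviation escape before the next mutation fires, so the process dwells at $\Gamma^{(n)}$ throughout the inter-mutation interval. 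This step needs a mild extension of Proposition \ref{TST} to an arbitrary nearest-neighbour initial configuration in place of $\delta_{x_0}$, which follows from the same proof because $\Gamma^{(L)}$ is the unique attractor and every post-mutation state lies in its basin.

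Next I would analyse a single mutation event. At $\Gamma^{(n)}$ a present trait $z$ carries $K\bar n(z)$ individuals, each mutating at per-capita rate $\sigma\mu(z)$, so a mutant $x=z+h$ appears at aggregate rate $K\sigma\,\bar n(z)\mu(z)\,p(z,dh)$; after the rescaling by $K\sigma$ this is precisely the jump rate $\bar n(z)\mu(z)\,p(z,dh)$ of the definition. The feature distinguishing the present result from the monomorphic case of Proposition \ref{TSS_Champagnat} is that migration renders the mutant's fate deterministic: even if the initial single individual of trait $x$ dies, it is re-seeded from its present neighbour within time $O(\frac1{K\epsilon})\ll\ln\frac1\epsilon$, since $K\epsilon\gg1$ by \eqref{rare_migration}, so no branching-survival factor survives in the limit. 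After inserting $x$ into the total fitness order of Assumption \ref{assp5}, the sign convention \eqref{eq:order} shows that resident competition fixes $x$ at $O(1)$ exactly when its nearest higher-fitness neighbour is absent, and suppresses $x$ to the negligible level when that neighbour is present. Running the mutation-free relaxation of the previous paragraph on the enlarged trait set then carries the system to the new equilibrium $\Gamma^{(n+1)}$; because this equilibrium is the alternating pattern on the refreshed fitness order, the present/absent parity is flipped for every trait below $x$ and preserved above $x$, which is exactly the two-case transition map of Definition \ref{TST_Definition_infinite tree_micro}.

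The main obstacle is to make the interlocking of re-seeding, invasion and cascade rigorous and uniform over the bounded number of present traits. One must show that the deterministic invasion criterion governs establishment with error probability tending to $0$ despite the $K^{-1}$-sized start, that the parity flip propagates all the way down the lower block of traits rather than only to the immediate neighbours of $x$, and that the whole cascade completes within a single mutation-free window of length $\Theta(\ln\frac1\epsilon)$. I would control these by coupling the early and intermediate phases with dominating and subordinate logistic and Lotka--Volterra processes, and by Freidlin--Wentzell large-deviation bounds ruling out spurious excursions and premature mutations. Assembling the relaxation, the rate computation and the cascade identification, the rescaled generator converges to that of $(\Gamma_t)$ on the test functions $\langle\cdot,f\rangle$, which gives the claimed finite-dimensional convergence.
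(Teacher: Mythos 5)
Your proposal follows essentially the same route as the paper's proof: you decompose the argument into (a) the exponential mutation clock at a configuration $\Gamma^{(n)}$ with aggregate rate $\sum_z \bar n(z)\mu(z)$ and jump kernel $\bar n(z)\mu(z)p(z,dh)$, which is the paper's first lemma (borrowed from Champagnat), and (b) the post-mutation rearrangement --- isolated-pair invasion plus the downward parity-flip cascade completing in time $O(\ln\frac{1}{\epsilon})$, hence invisible on the $\frac{1}{K\sigma}$ timescale --- which is exactly the paper's second lemma built on Proposition \ref{TST}, with the same use of \eqref{rare_mutation_condition} and the Freidlin--Wentzell bound. Your additional observations (migration re-seeding eliminating the branching-survival factor of Proposition \ref{TSS_Champagnat}, and the need to restart Proposition \ref{TST} from arbitrary post-mutation configurations) are correct refinements of points the paper treats implicitly, not a different method.
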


The proof is in the Appendix. According to \eqref{rare_mutation_condition}, the fixation timescale of order $\ln\frac{1}{\epsilon}$ is much shorter than the  mutation timescale of order $\frac{1}{K\sigma}$. At the same time, $\frac{1}{K\sigma} \ll e^{KC}$ prevents the system from drifting away from the TST equilibrium configuration on the mutation timescale \citep{FW84}.

 \begin{figure}[hbtp!]
 \centering
\includegraphics[width=220pt]{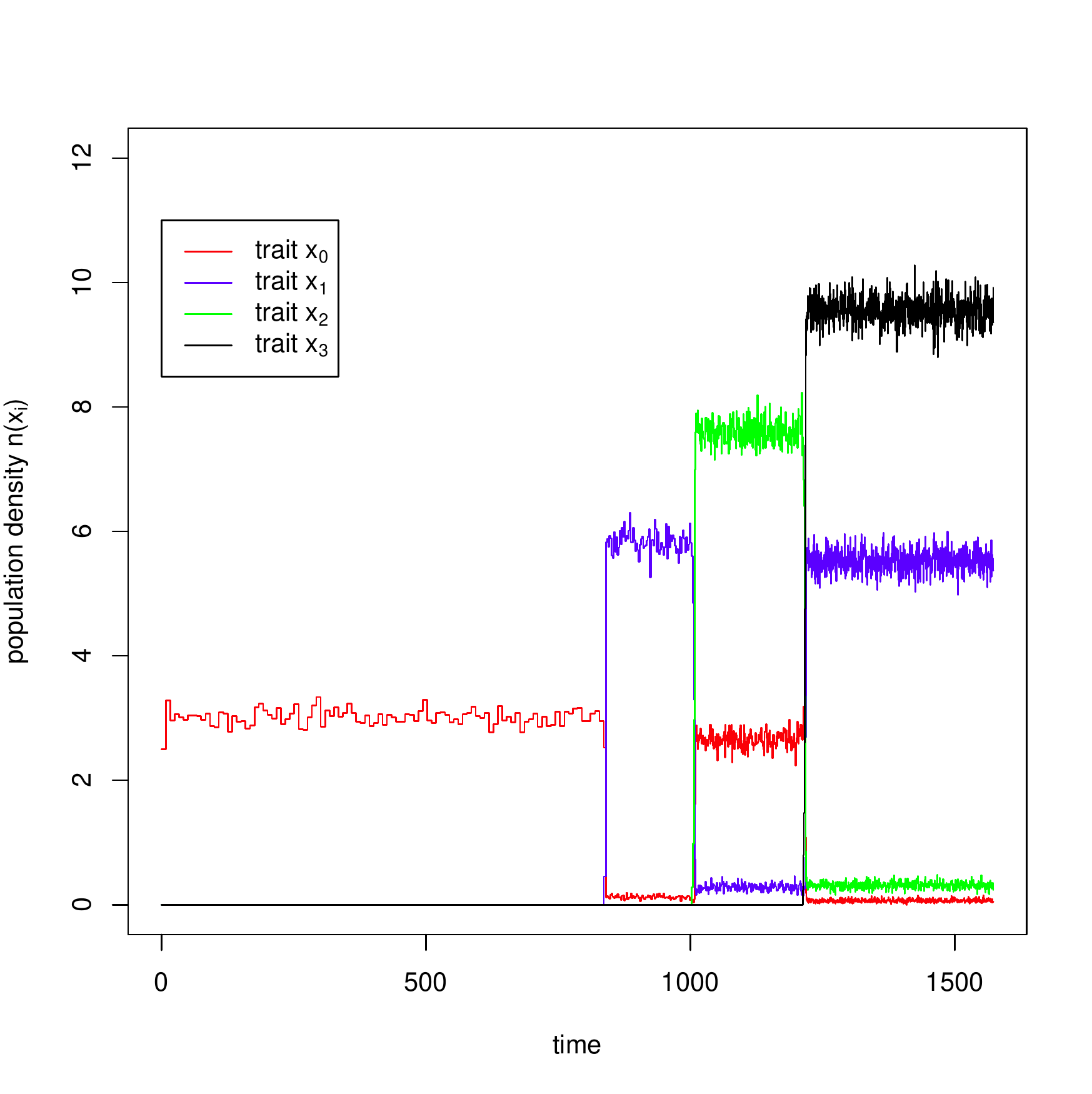}
\includegraphics[width=220pt]{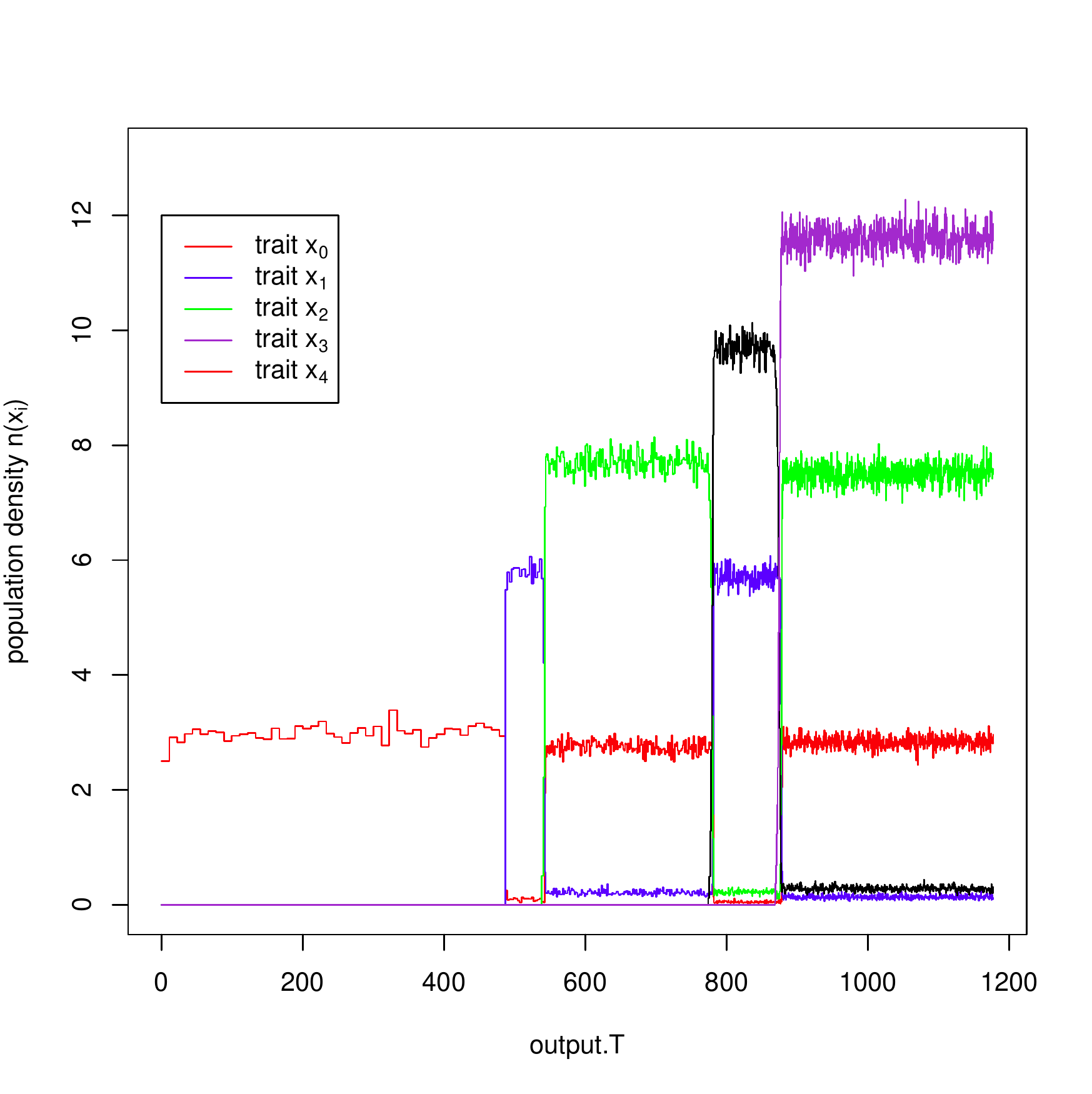}
\caption{\footnotesize Simulations of a trait substitution tree on the mutation timescale arising in Proposition \ref{TST_infinite_trait} on four- and five-type trait space.}
\label{TST_mutation}
 \end{figure}

Return to the examples shown in Figure \ref{fig:TST} and add mutations to them. Let us consider the particular case where a mutant is always fitter than all the existing traits. In both examples, the birth rates of the red, the blue, the green, the black and the purple-colored populations are respectively 3, 6, 8, 10, and 12, while their death rates are all constantly $0$. Take $\epsilon=K^{-0.8}$ and $\sigma=K^{-1.5}$, and the initial scaling parameter $K=400$. The simulation results are shown in Figure \ref{TST_mutation}. Remark how the arrival of each mutant ``reshuffles the cards'' by initiating a new round of competition which leads to a new equilibrium. Let us compare these examples with the ones shown in Figure \ref{fig:TSS}, where no migration is concerned. One sees that, with migration, a once-eliminated trait $x$ now has a chance to be revived, because its competitor $y$ which eliminated $x$ in the previous round may itself now be eliminated by some other trait. Observe that the TST process jumps from $\Gamma^{(n)}$ to $\Gamma^{(n+1)}$ on a mutation timescale of order $\frac{1}{K\sigma}$. Besides, on this timescale, the fixation process (describing short-term pair-wise competition) is not visible anymore. However, by zooming into the infinitesimal fixation period, $S$-type curves as in Figure \ref{fig:TST} will still emerge.


The specific rule of nearest-neighbor migration and competition is adopted in this section as an example to illustrate the development tree model. When the model is applied to a particular problem, the local migration and competition rules should be determined according to the context. Although the specific form of the process limits in Propositions \ref{TST} and \ref{TST_infinite_trait} (every other trait survives on the increasing fitness landscape) may no longer hold, it gives us some flavor of the tree form depicting a long-term dynamic, where short-term competitions and rare mutations lead to diversification and development.


\section{Discussion}
A notable feature of our tree model is the three distinct timescales related with different events. The shortest one is the lifecycle time of a single generation, which is of order $1$. The intermediate timescale is identified as the migration time of order $\ln \frac{1}{\epsilon}$. The relation between these two timescale (cf. equation \eqref{rare_migration}) justifies the macroscopic approximation of a population. Models limited to these two timescales have been applied in the framework of short-term competition/substitution, which leads to a temporary equilibrium. The novelty of our model is to add a third, the longest timescale of order $\frac{1}{K\sigma}$, which corresponds to the interval between two mutations. Observing from this timescale, the process is composed of a series of rapid transitions from one temporary equilibrium to another. Indeed, every temporary equilibrium, though lasting a long time, is eventually broken by the arrival of a mutant or an innovation. These three timescales of distinct orders allow us to draw a tree form graph depicting the long-term development of the alternatives in question.

%

Besides, though our model aims to explain the long-term development and diversification of technologies, ideas, custom etc. in various contexts, it remains an illustrative tool on the theoretical level. For example, the events such as birth, death and migration have different signification in different context. We have adopted an evolutionary setting in this paper so that the parameters such as birth-death rates or migration laws are exogenously given. There is however a large body of literature that focuses on the strategic side of individual decision making, and builds models with different mechanisms generating rational choices or imitative behavior. The usual approaches consist of non cooperative games (with decision-making on an individual level) and social learning (at the population level). The former often takes the uncertainty and heterogeneity on the individual level into account.\footnote{An incomplete list of references in economics includes \cite{Hiebert1974}, \cite{Stone1981}, \cite{Jensen1982}, \cite{ChaElia1990}, \cite{SinChan1992}, \cite{Meade1989}, \cite{BEMS1996}, \cite{CER2000}, \cite{Young2009}, \cite{Young2011} and \cite{KY2014}.} Integrating these approaches in our specification constitutes an obvious avenue for future research.



\section*{Appendix}

\begin{proof}[Proof of Proposition \ref{prop:stable analysis of LV}]
In fact, there are four fixed points of the two-dimensional L-V system \eqref{eq:lv}, namely, $(0,0)$, \,$(\bar n(x),0)$, \,$(0, \bar n(y))$, and $(n^*(x), n^*(y))$, where $(n^*(x), n^*(y))$ is such that

\begin{equation}\nonumber
\left\{
  \begin{array}{ll}
    b(x)-d(x)-\alpha(x,x)n_t(x)-\alpha(x,y)n_t(y)=0 \\
    b(y)-d(y)-\alpha(y,x)n_t(x)-\alpha(y,y)n_t(y)=0.
  \end{array}
\right.
\end{equation}

By simple calculation, we obtain that
\begin{equation}\nonumber
\left\{
  \begin{array}{ll}
    n^*(x)=\frac{\al(y,y)f(x,y)}{\al(x,x)\al(y,y)-\al(x,y)\al(y,x)} \\
    n^*(x)=\frac{\al(x,x)f(y,x)}{\al(x,x)\al(y,y)-\al(x,y)\al(y,x)}.
  \end{array}
\right.
\end{equation}
To make sense of the solution as a population density (which must be non-negative), one needs $f(x,y)\cdot f(y,x)>0$. It contradicts the assumption $f(x,y)<0, f(y,x)>0$. We thus exclude the solution $(n^*(x), n^*(y))$.

The Jacobian matrix for the system \eqref{eq:lv} at point $(0,0)$ is
\begin{equation}\nonumber
\left(
  \begin{array}{cc}
    b(x)-d(x) & 0 \\
    0 & b(y)-d(y) \\
  \end{array}
\right).
\end{equation}
Obviously its eigenvalues are both positive. Thus $(0,0)$ is unstable.

The Jacobian matrix at point $(\bar n(x), 0)$ is
\begin{equation}\nonumber
\begin{aligned}
\left(
  \begin{array}{cc}
    -\left(b(x)-d(x)\right) & -\al(x,y)\bar n(x) \\
    0 & b(y)-d(y)-\al(y,x)\bar n(x) \\
  \end{array}
\right)\\
=\left(
   \begin{array}{cc}
      -\left(b(x)-d(x)\right) & -\al(x,y)\bar n(x) \\
      0 & f(y,x) \\
   \end{array}
 \right).
\end{aligned}
\end{equation}
Since one of its eigenvalue $-\left(b(x)-d(x)\right)$ is negative whereas the other one is $f(y,x)>0$, the equilibrium $(\bar n(x), 0)$ is unstable.

The Jacobian matrix of system \eqref{eq:lv} at point $(0, \bar n(y))$ is
\begin{equation}\nonumber
\left(
  \begin{array}{cc}
    b(x)-d(x)-\al(x,y)\bar n(y) & 0 \\
    -\al(y,x)\bar n(y) & -\left(b(y)-d(y)\right) \\
  \end{array}
\right)\\
=\left(
   \begin{array}{cc}
     f(x,y) & 0 \\
     -\al(y,x)\bar n(y) & -\left(b(y)-d(y)\right) \\
   \end{array}
 \right),
\end{equation}
whose eigenvalues are both negative because of the condition $f(x,y)<0$.
Thus $(0, \bar n(y))$ is the only stable equilibrium of the system \eqref{eq:lv}.
\end{proof}

\begin{proof}[Proof of Proposition \ref{TST}]
Let us prove the result for the three-trait toy model (see Figure \ref{Figure_Phase3type_Micro}) to illustrate the basic idea of proof, though our analysis is not limited to the three-trait case only. Indeed, the whole machinery is available for any finite-trait space.

Assume  $\X=\{x_0, x_1, x_2\}$.
Let $\xi_t^K(x_0):=\langle X_t^K, 1_{\{x_0\}}\rangle$ and $\xi_t^K(x_i):=\langle X_t^K, 1_{\{x_i\}}\rangle$ for $i=1,2$.

 \begin{figure}[hbtp]
 \centering
 \def\svgwidth{400pt}
 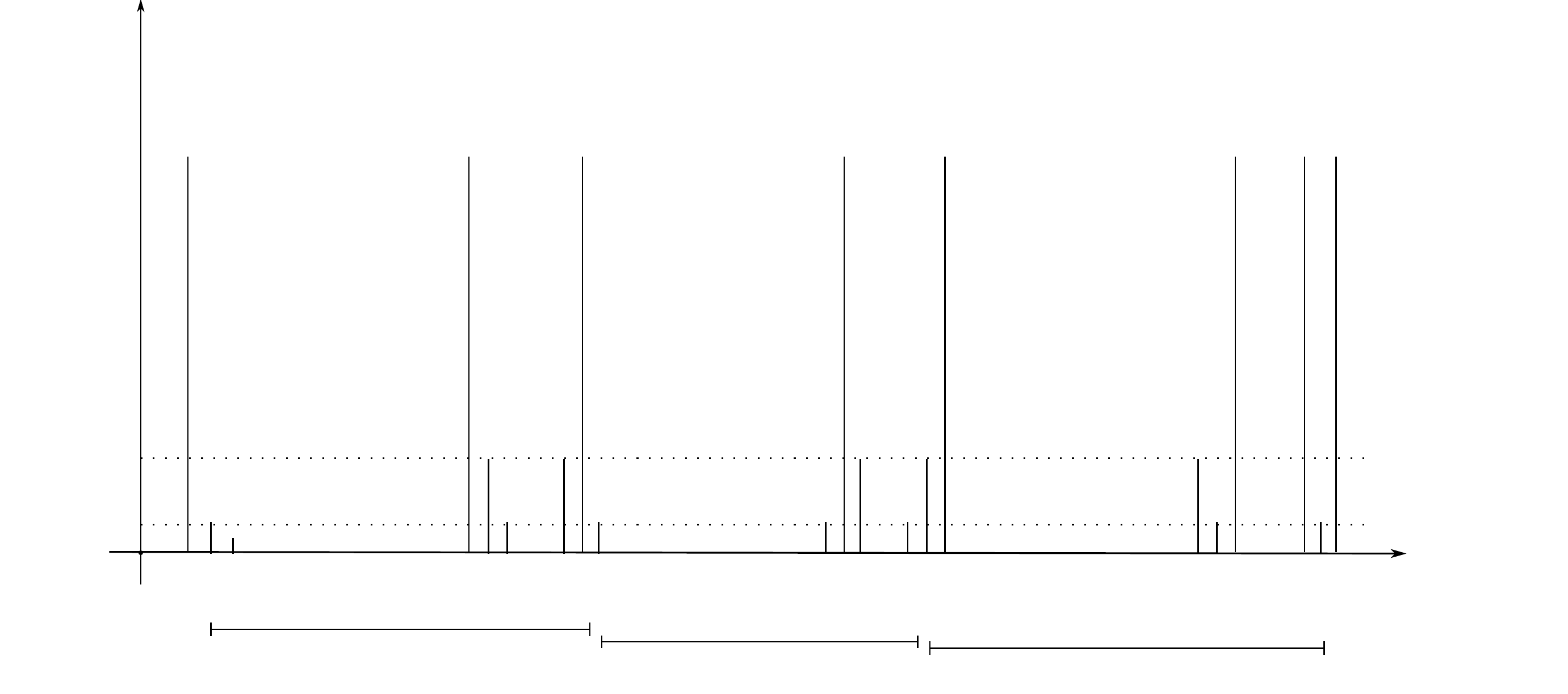
 \caption{{\footnotesize Phase evolution of mass bars in early time window on the three-trait site space}}.
 \label{Figure_Phase3type_Micro}
 \end{figure}

\textbf{Step 1}. Firstly, consider the emergence and growth of population at trait site $x_1$. Set $S_1^{\epsilon}=\inf\{t>0: \xi_t^K(x_1)\geq \epsilon\}$. Thanks to $\frac{N_0^K}{K}\to n_0>0$ in law as $K\to\infty$ and by applying the Law of Large Numbers of random processes (see Chap.11, Ethier and Kurtz 1986), one obtains that, for any $\de>0,\,T>0$,
\[
\lim\limits_{K\to\infty}\PP\left(\sup\limits_{0\leq t\leq T}\left|\frac{\xi_t^K(x_1)}{\epsilon}-n_t(x_1)\right|<\de\right)=1
\]
where $n_t(x_1)$ is governed by equation $\dot{n}(x_1)= m(x_0,x_1)n_0$ with initial $n_0(x_1)=0$. Therefore,
\begin{equation}\label{linear growth before epsilon threshold}
\lim\limits_{K\to\infty}\PP\left(\frac{1}{m(x_0,x_1)n_0}-\de<S_1^{\epsilon}<\frac{1}{m(x_0,x_1)n_0}+\de\right)=1,
\end{equation}
that is, $S_1^{\epsilon}$ is of order 1.

For any $\eta>0$, set $S_1^{\eta}=\inf\{t: t>S_1^{\epsilon},\,\xi_t^K(x_1)\geq \eta\}$. Consider a sequence of rescaled processes $\left(\frac{N_t^{K,1}}{K\epsilon}\right)_{t\geq S_1^{\epsilon}}$ with $\frac{N_{S_1^{\epsilon}}^{K,1}}{K\epsilon}=\frac{\xi_{S_1^{\epsilon}}^K(x_1)}{\epsilon}\to 1$ as $K\to\infty$. As before, by law of large numbers of random processes (see Chap.11 Ethier and Kurtz 1986), one obtains, for any $\de>0, \,T>0$,
\begin{equation}\label{lln before eta threshold}
\lim\limits_{K\to\infty}\PP\left(\sup\limits_{0\leq t\leq T}\left|\frac{N_t^{K,1}}{K\epsilon}-m_t\right|<\de\right)=1,
\end{equation}
where $m_t$ is governed by equation $\dot{m}=\bar{f}(x_1,x_0)m=\left(b(x_1)-d(x_1)-\al(x_1,x_0)\bar{n}(x_0)\right)m$ with $m_0=1$.

Set $T_1^{\eta/\epsilon}=\inf\{t-S_1^{\epsilon}: t>S_1^{\epsilon},\,\frac{N_t^{K,1}}{K\epsilon}\geq \eta/\epsilon\}$, and $t_1^{\eta/\epsilon}=\inf\{t>0: m_t\geq \eta/\epsilon\}$.
Then, for any $\de>0$, there exists $\de^{'}>0$ such that
\begin{equation}\label{time estimate before eta threshold}
\begin{aligned}
\lim\limits_{K\to\infty}&\PP\left(\left(\frac{1}{\bar{f}(x_1,x_0)}-\de\right)\ln\frac{1}{\epsilon}< S_1^\eta-S_1^{\epsilon}<\left(\frac{1}{\bar{f}(x_1,x_0)}+\de\right)\ln\frac{1}{\epsilon}\right)\\
=\lim\limits_{K\to\infty}&\PP\left(\left(\frac{1}{\bar{f}(x_1,x_0)}-\de\right)\ln\frac{1}{\epsilon}< T_1^{\eta/\epsilon}<\left(\frac{1}{\bar{f}(x_1,x_0)}+\de\right)\ln\frac{1}{\epsilon}\right)\\
=\lim\limits_{K\to\infty}&\PP\Bigg(\left(\frac{1}{\bar{f}(x_1,x_0)}-\frac{\de}{2}\right)\ln\frac{1}{\epsilon}< t_1^{\eta/\epsilon}<\left(\frac{1}{\bar{f}(x_1,x_0)}+\frac{\de}{2}\right)\ln\frac{1}{\epsilon},\\&\sup\limits_{0\leq t\leq t_1^{\eta/\epsilon}}\mid \frac{N_t^{K,1}}{K\epsilon}-m_t\mid<\de^{'}\Bigg)\\
=&1
\end{aligned}
\end{equation}
where the last equal sign is due to \eqref{lln before eta threshold}.

After population of trait $x_1$ reaches some $\eta$ threshold, the dynamics $\left(\xi_t^K(x_0),\xi_t^K(x_1)\right)$ can be approximated by the solution of a two-dimensional Lotka-Volterra equations. By Proposition \ref{prop:stable analysis of LV}, it takes time of order 1 (mark this time coordinator by $\widetilde S_1^{\eta}$) for the two subpopulations switching their mass distribution and gets attracted into $\eta-$neighborhood of the stable equilibrium $\left(0,\bar{n}(x_1)\right)$.

\textbf{Step 2.} Now consider the emerging and growth of population $\xi_t^K(x_2):=\langle X_t^K, 1_{\{x_2\}}\rangle$ at trait site $x_2$. Set $S_2^{\epsilon}=\inf\{t: t>\widetilde{S}_1^{\eta}, \,\xi_t^K(x_2)\geq \epsilon\}$.
Similarly as is done for $S_1^{\epsilon}$ in \eqref{linear growth before epsilon threshold}, one can get that $\lim\limits_{K\to\infty}\PP(S_2^{\epsilon}-\widetilde{S}_1^{\eta}=O(1))=1$. On a longer time scale, we will not distinguish $S_2^{\epsilon}$ from $\widetilde{S}_1^{\eta}$.

Set $S_2^{\eta}=\inf\{t: t>S_2^{\epsilon},\, \xi_t^K(x_2)\geq \eta\}$. One follows the same procedure to derive \eqref{time estimate before eta threshold} and asserts that for any $\de>0$,
\begin{equation}\label{time estimate before second eta threshold}
\lim\limits_{K\to\infty}\PP\left(\left(\frac{1}{\bar{f}(x_2,x_1)}-\de\right)\ln\frac{1}{\epsilon}< S_2^\eta-\widetilde S_1^{\eta}<(\frac{1}{\bar{f}(x_2,x_1)}+\de)\ln\frac{1}{\epsilon}\right)
=1.
\end{equation}
Note that assumption (\textbf{B}3) $\frac{2}{b(x_2)-d(x_2)}\geq \frac{1}{\bar f(x_1,x_0)}+\frac{1}{\bar f(x_2,x_1)}$ guarantees that $\xi^K_t(x_2)$ can not grow so fast in exponential rate $b(x_2)-d(x_2)$ such that it reaches some $\eta$-level before $S_2^{\eta}$.

During time period $(\widetilde S_1^{\eta}, S_2^{\eta})$, population at site $x_0$, on one hand, decreases due to the competition from more fitter trait $x_1$. On the other hand, it can not go below $\epsilon$ level due to the successive migration in a portion of $\epsilon$ from site $x_1$. More precisely, by neglecting migrant contribution,
$\xi_t^K(x_0)$ converges $n_t(x_0)$ in probability as $K$ tends to $\infty$, where
\begin{equation}
\dot{n}_t(x_0)=\left(b(x_0)-d(x_0)-\al(x_0,x_1)\bar{n}(x_1)\right) n_t(x_0)=\bar{f}(x_0,x_1)n_t(x_0)
\end{equation}
with $n_0(x_0)=\eta$. Let $\Delta S_2^\eta=S_2^\eta-\widetilde S_1^{\eta}$. Then, for any $\de>0$,
\begin{equation}
\begin{aligned}
&\lim\limits_{K\to\infty}\PP\left(\xi_{S_2^{\eta}}^K(x_0)\in(n_{\Delta
  S_2^{\eta}}(x_0)-\de,n_{\Delta S_2^{\eta}}(x_0)+\de)\right)\\
&=\lim\limits_{K\to\infty}\PP\left(\eta
  e^{\bar{f}(x_0,x_1)\Delta S_2^{\eta}}-\de<\xi_{S_2^{\eta}}^K(x_0)<\eta e^{\bar{f}(x_0,x_1)\Delta S_2^{\eta}}+\de\right)\\
&=\lim\limits_{K\to\infty}\PP\left(\eta
  \epsilon^{\mid\bar{f}(x_0,x_1)\mid/\bar{f}(x_2,x_1)}-\de<\xi_{S_2^{\eta}}^K(x_0)<\eta \epsilon^{\mid\bar{f}(x_0,x_1)\mid/\bar{f}(x_2,x_1)}+\de\right)\\
&=1
\end{aligned}
\end{equation}
where the second equality is due to \eqref{time estimate before second eta threshold}.
Taking the migration from site $x_1$ into account, we thus have
\begin{equation}\label{density estimate original type after second eta threshold}
\lim\limits_{K\to\infty}\PP\left(\xi_{S_2^{\eta}}^K(x_0)=O(\epsilon^{\mid\bar{f}(x_0,x_1)\mid/\bar{f}(x_2,x_1)}
\vee \epsilon)\right)=1.
\end{equation}
We proceed as before for $\widetilde S_1^{\eta}$ in step 1. After time $S_2^{\eta}$, the mass bars on dimorphic system $(\xi_t^K(x_1),\xi_t^K(x_2))$ can be approximated by ODEs and will be switched again in time of order 1 (marked by $\widetilde S_2^{\eta}$ as in Figure \ref{Figure_Phase3type_Micro}), and they are attracted into $\eta-$ neighborhood of $(0,\bar{n}(x_2))$. As for the population density on site $x_0$, one obtains from \eqref{density estimate original type after second eta threshold}
\begin{equation}
\lim\limits_{K\to\infty}\PP\left(\xi^K_{\widetilde{S}^{\eta}_2}(x_0)=O(\epsilon^{c_1})\right)=1
\end{equation}
where $c_1=\frac{|\bar f(x_0,x_1)|}{\bar f(x_2,x_1)}\wedge 1\leq 1$.

\textbf{Step 3.} We now consider the recovery of subpopulation at trait site $x_0$. Recovery arises because of the lack of effective competitions from its neighbor site $x_1$, or under negligible competitions since the local population density on $x_1$ is very low under the control of its fitter neighbor $x_2$.
Without loss of generality, we suppose $c_1:=\frac{\mid\bar{f}(x_0,x_1)\mid}{\bar f(x_2,x_1)}<1$ in \eqref{density estimate original type after second eta threshold}.

Set $S_0^{\eta}=\inf\{t: t>\widetilde S_2^{\eta}, \,\xi_t^K(x_0)\geq \eta\}$. We proceed as before in step 1. From \eqref{density estimate original type after second eta threshold}, $\frac{\xi_{\widetilde S_2^{\eta}}^K(x_0)}{\epsilon^{c_1}}$ converges to some positive constant (say $m_0$) in probability as $K\to\infty$. Thus, by applying law of large numbers to the sequence of processes $\frac{N^K_t}{K\epsilon^{c_1}}$, for any $\de>0,\,T>0,$
\begin{equation}
\lim\limits_{K\to\infty}\PP\left(\sup\limits_{0\leq t\leq T}\left|\frac{\xi_t^K(x_0)}{\epsilon^{c_1}}-m_t\right|<\de\right)=1
\end{equation}
where $m_t$ is governed by logistic equation $\dot{m}=\left(b(x_0)-d(x_0)\right)m$ starting with a positive initial $m_0$.

Following the same way to obtain \eqref{time estimate before eta threshold}, time length $S_0^{\eta}-\widetilde S_2^{\eta}$ can be approximated by time needed for dynamics $m$ to approach $\eta/\epsilon^{c_1}$ level, which is of order $\frac{c_1}{(b(x_0)-d(x_0))}\ln\frac{1}{\epsilon}$, i.e. for any $\de>0$,
\begin{equation}\label{time estimate for original recovery}
\lim\limits_{K\to\infty}\PP\left(\left(\frac{c_1}{b(x_0)-d(x_0)}-\de\right)\ln\frac{1}{\epsilon}< S_0^\eta-\widetilde S_2^{\eta}<\left(\frac{c_1}{b(x_0)-d(x_0)}+\de\right)\ln\frac{1}{\epsilon}\right)
=1.
\end{equation}
At the same time, $\xi^K_t(x_1)$ converges in probability to $\psi_t$ which satisfies equation $\dot{\psi}=\bar{f}(x_1,x_2)\psi$ with $\psi_{\widetilde{S}^{\eta}_2}=\eta$. Then, we can justify the following estimate for population density at site $x_1$,
\begin{equation}\label{density estimate first type after original recovery}
\lim\limits_{K\to\infty}\PP\left(\xi_{S_0^{\eta}}^K(x_1)=O(\epsilon^{c_2}
\vee \epsilon)\right)=1
\end{equation}
where $c_2=\frac{c_1\mid\bar{f}(x_1,x_2)\mid}{b(x_0)-d(x_0)}$.

We now combine all these estimates \eqref{time estimate before eta threshold}, \eqref{time estimate before second eta threshold}, \eqref{time estimate for original recovery} together, and conclude that
\begin{equation}
\lim\limits_{K\to\infty}\PP\left(\| X^K_{t\ln\frac{1}{\epsilon}}-\Gamma^{(2)}\|<\de\right)=1
\end{equation}
for $t>\bar{t}_2:=\frac{1}{\bar{f}(x_1,x_0)}+\frac{1}{\bar{f}(x_2,x_1)}+\frac{c_1}{b(x_0)-d(x_0)}$ under the total variation norm $\|\cdot\|$ on $\mathcal{M}_F(\X)$.
\end{proof}

\begin{proof}[Proof of Proposition \ref{TST_infinite_trait}]
First recall that all Markov jump processes can be specified by two features: the exponentially distributed waiting time, and the one-step transition rule (\cite{EK1986}). Our proof, which shows that the limiting process is a Markov jump process $\Gamma_t$, is hence made up of two parts.

The first part of the proof consists in the characterization of exponential waiting time of each mutation arrival, as can be seen from the construction of the process in Section \ref{section:preliminary}. Let $L=2l$, $X_0^{K,\epsilon,\sigma}=\Gamma^{(L)}$, and $\tau$ be the first mutation arrival time after time 0. Then, similar arguments used in \cite[Lemma 2 (c)]{Champnt2006} shows us that
\begin{lemma}
\begin{equation}\label{eq:33}
\lim\limits_{K\to\infty} \PP\left(\tau>\frac{t}{K\sigma}\right)=\exp\left(-t\sum_{i=0}^l\bar n(x_{2i}^{(2l)})\mu(x_{2i}^{(2l)})\right),
\end{equation}
and
\begin{equation}\label{eq:34}
 \lim\limits_{K\to\infty} \PP\left(\textrm{at time}\,\tau,\, \textrm{mutant comes from trait}~  x_{2k}^{(2l)}\right)=\frac{\bar n(x_{2k}^{(2l)})\mu(x_{2k}^{(2l)})}{\sum_{i=0}^l\bar n(x_{2i}^{(2l)})\mu(x_{2i}^{(2l)})}.
 \end{equation}
 \end{lemma}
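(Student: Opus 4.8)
The plan is to realise the first mutation time $\tau$ as the first point of a doubly-stochastic (Cox) point process whose intensity is driven by the current subpopulation sizes, and then to show that on the mutation timescale these sizes are frozen at their equilibrium values $\bar n(x^{(2l)}_{2i})$, so that the intensity is essentially the constant $K\sigma R$ with $R:=\sum_{i=0}^l\bar n(x^{(2l)}_{2i})\mu(x^{(2l)}_{2i})$.

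First I would record the mutation mechanism. From the pathwise construction in Section \ref{section:preliminary}, every individual of trait $x$ triggers a mutation at rate $\sigma\mu(x)$, independently of the others and of the competition/migration dynamics, so at any instant $s$ the population emits mutations at total rate $\Lambda_s^K=\sigma\sum_{x\in\X}\mu(x)N_s^K(x)$, where $N_s^K(x)=K\langle X_s^K,1_{\{x\}}\rangle$ counts the individuals of trait $x$. Conditionally on the (mutation-free) path $(X_s^K)$ up to the first mutation, $\tau$ is the first jump of an inhomogeneous Poisson process of rate $\Lambda_s^K$, whence $\PP(\tau>t\mid (X_s^K))=\exp(-\int_0^t\Lambda_s^K\,ds)$.

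The heart of the argument is to show that before $\tau$ the sizes stay pinned at $N_s^K(x_{2i})\approx K\bar n(x_{2i})$ for the present (even) traits, while the absent (odd) traits have vanishing density. Since we start exactly at $\Gamma^{(L)}$, which is the stable equilibrium of the mutation-free dynamics, Proposition \ref{TST} gives that the mutation-free process sits in an $\eta$-neighborhood of $\Gamma^{(L)}$ once the fixation time of order $\ln\frac1\epsilon$ has elapsed, the odd traits being held at $o(1)$ density by the balance between migration and competition. Leaving this neighborhood requires a large fluctuation, so the exit time from the domain of attraction is exponentially large, of order $e^{KV}$ for some $V>0$, by the Freidlin--Wentzell estimates \cite{FW84}. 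Condition \eqref{rare_mutation_condition}, $\tfrac{1}{K\sigma}\ll e^{KC}$ for all $C>0$, forces the mutation to arrive on the timescale $\tfrac{1}{K\sigma}$, well before any such excursion, so with probability tending to $1$ the population is still at $\Gamma^{(L)}$ when the first mutation fires. On this good event $\Lambda_s^K=K\sigma\big(\sum_{i=0}^l\mu(x_{2i})\bar n(x_{2i})+o(1)\big)=K\sigma R(1+o(1))$, the absent traits contributing only $o(K\sigma)$.

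Equation \eqref{eq:33} then follows by the time change $t\mapsto t/(K\sigma)$: on the good event $\int_0^{t/(K\sigma)}\Lambda_s^K\,ds=tR(1+o(1))$, so $\PP(\tau>t/(K\sigma))=\E[\exp(-\int_0^{t/(K\sigma)}\Lambda_s^K\,ds)]\to e^{-tR}$. For \eqref{eq:34} I would invoke the superposition/thinning property of the independent mutation clocks: conditionally on a mutation at time $\tau$, it originates from trait $x_{2k}$ with probability $\mu(x_{2k})N_\tau^K(x_{2k})/\Lambda_\tau^K$, which on the good event converges to $\mu(x_{2k})\bar n(x_{2k})/R$. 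The main obstacle is precisely the metastability step: reconciling the $\ln\frac1\epsilon$ fixation time on which $\Gamma^{(L)}$ is reached with the exponentially long exit time that keeps the population there, so that the constant-intensity approximation is valid uniformly up to $\tau$; this is where \eqref{rare_migration} and \eqref{rare_mutation_condition} are used in an essential way, exactly as in \cite[Lemma 2 (c)]{Champnt2006}.
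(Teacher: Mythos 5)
Your proposal is correct and takes essentially the same route as the paper: the paper proves this lemma simply by pointing to the argument of \cite[Lemma 2 (c)]{Champnt2006}, which is precisely the Poisson-intensity comparison combined with the Freidlin--Wentzell exit-time estimate that you spell out (the paper itself invokes \cite{FW84} to justify that $\tfrac{1}{K\sigma}\ll e^{KC}$ keeps the population pinned at the equilibrium until the first mutation). In fact you supply more detail than the paper does, including the one adaptation this polymorphic setting with migration requires beyond Champagnat's monomorphic case, namely that the nearly-absent odd traits, held at low density by migration, contribute only $o(K\sigma)$ to the total mutation intensity.
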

 We will not repeat the details of the proof.

The second part of the proof can been seen as a corollary of Proposition \ref{TST}. It specifies the new equilibrium configuration and shows that fixation time of the new configuration is of order $\ln\frac{1}{\epsilon}$, which is invisible on the mutation timescale.

\begin{lemma}
Assume that $X_0^{K,\epsilon,\sigma}=\Gamma^{(2l)}+\frac{1}{K}\delta_{x_{2k}^{(2l)}+h}$ for some $0\leq k\leq l$. Then there exists a constant $C>0$, for any $\delta>0$,  such that
\begin{equation}
\lim\limits_{K\to\infty} \PP\Bigl(\tau>C\ln\frac{1}{\epsilon},\,\sup\limits_{t\in(C\ln\frac{1}{\epsilon}, \tau)}\|X_t^{K,\epsilon, \sigma}-\Gamma^{(2l+1)}\|<\delta\Bigr)=1
\end{equation}
where $\Gamma^{(2l+1)}$ associated to Definition \ref{TST_Definition_infinite tree_micro} (i) is defined as the following, in the first case
\begin{equation*}
x_i^{(2l+1)}=x_i^{(2l)}\; \forall \,0\leq i\leq 2j,\;\, x^{(2l+1)}_{2j+1}=x^{(2l)}_{2k}+h,\;\, x_i^{(2l+1)}= x_{i-1}^{(2l)}\; \forall \,2j+2\leq i\leq 2l+1;
 \end{equation*}
 in the second case
\begin{equation*}
x_i^{(2l+1)}=x_i^{(2l)}\; \forall \,0\leq i\leq 2j-1,\,\; x^{(2l+1)}_{2j}=x^{(2l)}_{2k}+h,\;\, x_i^{(2l+1)}= x_{i-1}^{(2l)} \; \forall \,2j+1\leq i\leq 2l+1.
\end{equation*}
\end{lemma}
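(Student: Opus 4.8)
The plan is to read this lemma as the local, post-mutation analogue of Proposition \ref{TST}: the arrival of the mutant at $y:=x^{(2l)}_{2k}+h$ reshuffles the nearest-neighbor adjacency on the fitness landscape, and the population then relaxes, on the migration timescale $\ln\frac{1}{\epsilon}$, to the unique stable configuration of the new chain. First I would identify that configuration by a purely structural argument. Inserting $y$ into the total order of Assumption \ref{assp5} splits the adjacent pair that used to surround $y$ and leaves every other adjacency intact; reading off the alternating occupation from the fittest trait downward (the fittest is present, it suppresses its lower neighbor, that trait is then absent and frees the next, and so on) one checks that the resulting stable configuration is exactly the $\Gamma^{(2l+1)}$ of Definition \ref{TST_Definition_infinite tree_micro}(i). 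In the first case the upper neighbor $x^{(2l)}_{2j+1}$ of $y$ is absent, so $y$ grows to $\bar n(y)$ and the occupation flip propagates strictly below it; in the second case the upper neighbor $x^{(2l)}_{2j}$ of $y$ is present, so $y$ stays at an $\epsilon$-floor, yet its mere presence decouples $x^{(2l)}_{2j-1}$ from $x^{(2l)}_{2j}$ and still triggers the same downward flip. In both cases the traits strictly above the active region remain frozen at their equilibrium sizes.

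Next I would dispose of the clause $\tau>C\ln\frac{1}{\epsilon}$. Throughout the relaxation the total mass is $O(1)$, since the process is stochastically dominated, via Assumption \ref{assp1}, by the logistic upper envelope used to justify its existence; hence the total mutation rate is at most of order $K\sigma$. The probability that any mutation occurs on a window of length $C\ln\frac{1}{\epsilon}$ is therefore $O\big(K\sigma\ln\frac{1}{\epsilon}\big)$, which tends to $0$ by the left inequality $\ln\frac{1}{\epsilon}\ll\frac{1}{K\sigma}$ of \eqref{rare_mutation_condition}. This gives $\PP(\tau>C\ln\frac{1}{\epsilon})\to1$, so that on the relevant event the dynamics up to $C\ln\frac{1}{\epsilon}$ coincide with those of the mutation-free model of Proposition \ref{TST}.

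The heart of the proof is the relaxation estimate, which I would obtain by transcribing, step by step, the three-phase analysis of Proposition \ref{TST} (emergence, Lotka-Volterra switch, recovery). The key point, already used there, is that an apparently ``absent'' trait is in fact continuously fed by $\epsilon$-order migration from its present neighbor, so its growth obeys a deterministic ODE through the law of large numbers for density-dependent processes (\cite{EK1986}, Chap.~11); no branching-process invasion probability enters, which is exactly why the outcome is deterministic despite $y$ starting from a single individual. Thus $y$ (first case) and each successively freed lower trait emerge from their migration-sustained $\epsilon$-floor, grow at their positive fitness rate $\bar f(\cdot,\cdot)>0$, and reach the $\eta$-threshold in time of order $\ln\frac{1}{\epsilon}$; the ensuing two-dimensional Lotka-Volterra switch takes time $O(1)$ by Proposition \ref{prop:stable analysis of LV}; and the newly suppressed trait descends not to $0$ but to a residual level $\epsilon^{c}$. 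I would propagate the exponents $c_1,c_2,\dots$ down the cascade as in Step 3 of Proposition \ref{TST}, so that each freed trait recovers from its $\epsilon$-power floor. Since there are at most $2l+1$ phases, each costing $O(\ln\frac{1}{\epsilon})$, the phase durations sum to a constant $C>0$ for which the configuration lies within $\delta$, in total variation on $\mathcal{M}_F(\X)$, of $\Gamma^{(2l+1)}$ for all $t\in(C\ln\frac{1}{\epsilon},\tau)$.

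The main obstacle is the combinatorial bookkeeping of the cascade rather than any single estimate. One must check that the flip propagates in the correct direction and terminates at precisely the configuration of Definition \ref{TST_Definition_infinite tree_micro}(i): in particular that the traits above the active region genuinely stay frozen --- their only link to it is through a neighbor pinned at the $\epsilon$-floor, so the vanishing-kernel rule of Assumption \ref{assp5} decouples them --- and that the residual exponents $c_i$ remain bounded, so that every intended recovery does reach the $\eta$-threshold before the next phase rather than being re-suppressed. A secondary delicate point is uniformity up to the random time $\tau$, which I would handle by first proving the statement on a deterministic window and then intersecting with the high-probability event $\{\tau>C\ln\frac{1}{\epsilon}\}$ from the second step.
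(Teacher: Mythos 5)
Your proposal follows essentially the same route as the paper's proof: the rare-mutation condition \eqref{rare_mutation_condition} yields $\PP\bigl(\tau>C\ln\tfrac{1}{\epsilon}\bigr)\to 1$ (the paper reads this off the exponential waiting-time limit \eqref{eq:33}), and the relaxation to $\Gamma^{(2l+1)}$ is obtained by inserting the mutant into the fitness order, resolving the isolated pair by the two-type Lotka--Volterra argument, and letting the occupation flip cascade down the lower-fitness traits in total time $O(\ln\tfrac{1}{\epsilon})$, exactly as in the proof of Proposition \ref{TST}. Your handling of the two cases (the mutant growing to $\bar n(x^{(2l)}_{2k}+h)$ versus staying at the migration-sustained $\epsilon$-floor while decoupling its lower neighbor), of the frozen higher-fitness traits, and of uniformity up to $\tau$ matches the paper's argument, only written out in greater detail.
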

\begin{proof}[Proof of the lemma]
Indeed, equation \eqref{eq:33} implies that for all $C>0$,
\begin{equation}\lim\limits_{\epsilon\to 0}\PP(\tau^{\epsilon}>C\ln\frac{1}{\epsilon})=1.
\end{equation}

According to the fitness landscape, there is one and only one ordered position for the new arising trait $x_{2k}^{(2l)}+h$ in $\Gamma^{(2l)}$. Suppose there exists $j\in \{0,1,\ldots, l\}$ such that $x_{2k}^{(2l)}+h$ fits between $x_{2j}^{(2l)}$ and $x_{2j+1}^{(2l)}$, i.e.
 \begin{equation}
 x_{2j-1}^{(2l)}\prec x_{2j}^{(2l)}\prec x_{2k}^{(2l)}+h\prec x_{2j+1}^{(2l)}.
\end{equation}

Since both traits $x_{2j-1}^{(2l)}$ and $x_{2j+1}^{(2l)}$ are absent in $\Gamma^{(2l)}$, the pair $(x_{2j}^{(2l)},\, x_{2k}^{(2l)}+h)$ is isolated  and hence without competition from others. By the same argument as in the proof of Proposition \ref{TST}, the two-type system converges to $(0,\,\bar n(x_{2k}^{(2l)}+h)\delta_{x_{2k}^{(2l)}+h})$ in time of order $O(\ln\frac{1}{\epsilon})$. For those traits of higher fitness than the isolated pair, nothing changes due to their isolation. Whereas for the traits of lower fitness than the pair, trait $x_{2j-1}^{(2l)}$ increases exponentially due to the decay of its fitter neighbor $x_{2j}^{(2l)}$. So on and so forth, the mass occupation flips on the left hand side of $x_{2j}^{(2l)}$. The entire rearrangement process takes time of order $O(\ln\frac{1}{\epsilon})$.

In a similar way, we can prove the case where there is $j\in \{0,1,\ldots, l\}$ such that $x_{2j-1}^{(2l)}\prec x_{2k}^{(2l)}+h\prec x_{2j}^{(2l)}\prec x_{2j+1}^{(2l)}$.
\end{proof}
The case where $L=2l+1$ can be proved similarly.
\end{proof}

\bibliographystyle{ecca}
\bibliography{tstbib}

\end{document}